\documentclass[lettersize,journal]{IEEEtran}
\usepackage{amsmath,amssymb,amsfonts}
\usepackage{algorithmic}
\usepackage{algorithm}
\usepackage{array}
\usepackage[caption=false,font=normalsize]{subfig}
\usepackage[table]{xcolor}
\captionsetup[subfloat]{labelformat=simple,labelsep=space,%
  labelfont=sf,textfont=sf,font=footnotesize}

\captionsetup[subfloat]{labelformat=simple,labelsep=space}

\usepackage{textcomp}
\usepackage{stfloats}
\usepackage{url}
\usepackage{verbatim}
\usepackage{graphicx}
\usepackage{amsthm}
\usepackage{xcolor}
\usepackage{hyperref}
\newtheorem{proposition}{Proposition}

\usepackage{amsthm}
\usepackage{cite}
\usepackage{array, makecell}
\usepackage{mathtools}
\newtheorem{theorem}{Theorem}
\newtheorem{lemma}{Lemma}
\newtheorem{assumption}{Assumption}
\usepackage{booktabs}
\usepackage{placeins}     

\setlength{\abovecaptionskip}{3pt}
\setlength{\belowcaptionskip}{-2pt}


\begin{document}

\title{REVERB-FL: Server-Side Adversarial and Reserve-Enhanced Federated Learning for Robust Audio Classification}

\author{Sathwika Peechara and Rajeev Sahay

\thanks{S. Peechara is with the Department of Computer Science and Engineering, University of California San Diego, San Diego, CA, 92093 USA. E-mail: speechara@ucsd.edu.}
\thanks{R. Sahay is with the Department of Electrical and Computer Engineering, University of California San Diego, San Diego, CA, 92093 USA. E-mail: r2sahay@ucsd.edu.}
\thanks{This work was supported in part by the UC San Diego Academic Senate under grant RG116457 and in part by the National Science Foundation (NSF) under grant ECCS-2512912.}}
\maketitle

\begin{abstract} 
Federated learning (FL) enables a privacy-preserving training paradigm for audio classification but is highly sensitive to client heterogeneity and poisoning attacks, where adversarially compromised clients can bias the global model and hinder the performance of audio classifiers. To mitigate the effects of model poisoning for audio signal classification, we present \textbf{REVERB-FL}, a lightweight, server-side defense that couples a small \emph{reserve set} (approximately 5\%) with pre- and post-aggregation retraining and adversarial training. After each local training round, the server refines the global model on the reserve set with either clean or additional adversarially perturbed data, thereby counteracting non-IID drift and mitigating potential model poisoning without adding substantial client-side cost or altering the aggregation process. We theoretically demonstrate the feasibility of our framework, showing faster convergence and a reduced steady-state error relative to baseline federated averaging. We validate our framework on two open-source audio classification datasets with varying IID and Dirichlet non-IID partitions and demonstrate that REVERB-FL mitigates global model poisoning under multiple designs of local data poisoning. 
\end{abstract}

\begin{IEEEkeywords}
Adversarial attacks, audio classification, federated learning, model poisoning, trustworthy ML
\end{IEEEkeywords}

\section{Introduction}

\IEEEPARstart{A}{udio} classification tasks in machine learning are ubiquitous in many applications, including speech recognition, environmental sound classification, and sentiment analysis. Deep neural networks trained on spectrogram-based features, such as Mel-frequency cepstral coefficient (MFCC) features or short-time Fourier transforms (STFT), have shown to be effective in extracting time-frequency patterns from audio signals. However, these models often require centralized large scale training data, raising practical concerns. Audio recordings are inherently sensitive, often containing biometric or contextual information, and centralized aggregation risks privacy violations. Furthermore, audio data is naturally distributed across edge devices (e.g., smartphones, IoT sensors, etc.), making centralized storage inefficient.

Federated learning (FL) allows for a privacy-preserving training approach across edge devices without transmitting raw data away from device, making it a more secure method for speech and audio applications, where data cannot be centralized due to privacy regulations or bandwidth constraints\cite{2018federated}. In an FL network, each client (i.e., edge device) locally trains a model on its local collected and stored dataset using stochastic gradient descent (SGD). Model parameters, rather than the data itself, are subsequently aggregated to a central server using weighted averaging (FedAvg)\cite{fedAvg_baseline}. Despite its privacy-preserving nature, audio FL frameworks are susceptible to \emph{model poisoning attacks}, where malicious clients inject adversarial perturbations (e.g., via gradient-based attacks, or additive noise) into their local training data before computing updates \cite{bhagoji2019analyzing,Rodriguez-Barroso2022Survey,fl_poisoning_attack}. This creates perturbations that are \emph{audio-wise imperceptible}, (i.e., they do not noticeably alter the audio to human listeners) \cite{robust_audio_adversarial} but hinder the training process, resulting in backdoor attacks or poor convergence, which ultimately compromise the final model's performance.

Robustness against adversarial model poisoning attacks is a major concern in FL-based audio classification \cite{fl_survey}. Specifically, prior work \cite{ evasion_attacks_fl} has shown that audio classification models are particularly vulnerable to model poisoning attacks due to the high dimensionality of spectrograms and the sensitivity of deep learning models to input perturbations. Even small perturbations can cause significant accuracy drops in applications such as voice assistants and environmental sound detection, where misclassifications can have safety or operational consequences. These challenges motivate the need for methods that both preserve privacy and improve robustness when applying FL to audio domains. Although prior studies have explored adversarial robustness in centralized audio models or in limited non-audio classification-based FL applications \cite{fl_audio_challenges, robust_audio_adversarial}, lightweight defenses that systematically address poisoning robustness while maintaining convergence stability in spectrogram-based federated audio classification have not been investigated. 

To address these challenges, we propose Reserve-Enhanced Verification and Robustness in Federated Learning (\textbf{REVERB-FL}), which is a novel framework that combines server-side stabilization with adversarial robustness, designed to both improve convergence and mitigate training-time attacks. Our approach is motivated by the observation that federated training amplifies instability under heterogeneous data and adversarial conditions, and that existing filtering-based defenses often rely on unrealistic trust assumptions (i.e., the ability to reliably distinguish adversarial updates from honest-but-heterogeneous ones under non-IID data) \cite{krum, trimmedmean, bulyan, median}. Such approaches reduce the effective training data available to the global model, which is particularly problematic in audio domains where every sample is valuable due to the difficulty of collecting and cleaning high-quality data. By contrast, our methods strengthen the global model directly at the server, providing a defense mechanism that is lightweight, scalable, and compatible with standard FL pipelines \cite{server_finetune, defense_aware_fl}. We provide a theoretical convergence bound highlighting the feasibility of our method, which is empirically validated in multiple adversarial settings and shown to outperform multiple considered baselines. 

\textbf{Summary of Contributions:} 
Specifically, our contributions can be summarized as follows:
\begin{enumerate} 
    \item \textbf{Federated audio robustness framework with reserve set retraining:} We develop a federated audio classification framework with STFT-based inputs that incorporates server-side reserve set retraining to stabilize model aggregation in adversarial environments. 

    \item \textbf{Convergence analysis:} We provide a theoretical convergence bound for our proposed framework, showing that our framework maintains the standard guarantees of FedAvg while enhancing robustness in adversarial settings.


    \item \textbf{Empirical evaluation}: We provide extensive experiments demonstrating that both reserve set retraining and adversarial retraining significantly improve robustness compared to multiple baselines, including the standard FedAvg approach, highlighting the efficacy of our method for secure federated audio systems.
\end{enumerate}

The remainder of this paper is organized as follows. Sec.~\ref{related_work} reviews related work on adversarial robustness in federated learning-based audio classification. Sec.~\ref{methods} presents our signal and classifier modeling, formalizes the federated learning setup, defines the threat model for adversarial poisoning attacks, and details our proposed REVERB-FL framework. Sec.~\ref{methods} also presents the theoretical feasibility of our proposed framework. Sec.~\ref{results} describes our experimental setup and presents our empirical evaluation across multiple datasets, data partitions, and attack scenarios. Finally, Sec.~\ref{conclusion_sec} discusses concluding remarks and directions for future work.

\section{Related Works} \label{related_work}
Deep learning methods have achieved strong performance on audio classification tasks by operating on time–frequency representations such as STFTs or mel-spectrograms and training convolutional neural network (CNN) or hybrid CNN–recurrent neural network (RNN) architectures, with more recent work exploring transformers and self-supervised representations \cite{spectrograms, zhang2022transformers, wang2022selfsupervised, dai2017very}. However, centralized training of these models raises privacy concerns, as audio recordings often contain sensitive biometric or contextual information that cannot be easily anonymized.

Federated learning (FL) addresses these privacy concerns by enabling collaborative model training without centralizing raw audio data, preserving data privacy. Applications span speech recognition, keyword spotting, and emotion recognition, where data are naturally distributed across devices and cannot always be centralized \cite{fl_speech_emotion, fl_kws, fl_speech_recognition}. Canonical FedAvg \cite{fedAvg_baseline} aggregates client updates by data size, but non-independent and identically distributed (non-IID) 
client data (where label distributions vary across clients) and device heterogeneity make audio FL particularly challenging \cite{hsu2019measuring, fl_audio_challenges}. Numerous FL variants have been proposed to stabilize convergence or personalize models, including FedProx, Per-FedAvg, and SCAFFOLD \cite{fedprox, per_fedavg, scaffold}. Yet, these methods primarily address heterogeneity rather than robustness.

Meanwhile, adversarial vulnerability has been widely documented in both centralized and federated settings. Gradient-based attacks \cite{fgsm, pgd}, as well as additive Gaussian noise can significantly reduce audio model performance in centralized models at inference \cite{robust_audio_adversarial}. In FL, these same perturbation techniques are applied as \emph{model poisoning attacks}, where malicious clients perturb local training data, and the poisoned updates propagate through aggregation and bias the global model \cite{bhagoji2019analyzing,poison_fl}. The effect is amplified under non-IID conditions, making audio FL an attractive target for adversaries.

Defenses against model poisoning in federated learning have been extensively studied for general FL settings, though few target audio-specific applications exist. One recent exception is the Knowledge Distillation Defense Framework (KDDF) for federated automatic speech recognition\cite{knowledge_distillation}, which defends against backdoor attacks by detecting triggers at inference time, but does not address general gradient-based poisoning attacks applied during training. 

The existing general (non-audio specific) FL defenses can be grouped into three main strategies. Robust aggregation functions such as Krum, trimmed mean, median, Bulyan, and more recent dynamic schemes \cite{krum, bulyan, trimmedmean, median} aim to suppress malicious updates but risk discarding useful data when clients are few. Regularization-based approaches such as FedProx, SCAFFOLD, or personalized FL frameworks \cite{fedprox, scaffold, pfedme} mitigate client drift but do not directly improve adversarial robustness. Adversarial training \cite{tramer2018ensemble} has been extended to FL, but client-side variants can be unstable under non-IID and add device overhead \cite{fl_at_instability, yan2025federated}.
Server-side defenses have therefore gained attention, including finetuning on trusted side datasets \cite{server_finetune} and defense-aware aggregation \cite{defense_aware_fl}. Complementary to these, Yan \textit{et al.} propose a federated adversarial training scheme that generates gradient-based adversarial examples on clients and couples this with a personalized evaluation policy to reweight aggregation, improving robustness under both multiple threat models \cite{yan2025federated}. Although their setup involves multi-site datasets with inherently heterogeneous distributions, the method does not explicitly address or analyze non-IID client drift, and its reliance on local adversarial training increases on-device cost. 

In contrast, our framework shifts robustness entirely to the server via reserve-set adversarial retraining, providing non-IID stability without modifying client updates or aggregation. Building on these insights, our work contributes a server-side defense specifically for audio FL. We introduce REVERB-FL, which combines reserve-set retraining to stabilize global updates with adversarial augmentation to strengthen robustness against poisoning. Unlike aggregation-based defenses, REVERB-FL does not discard client updates, and unlike client-side adversarial training, it introduces no additional cost or instability at the device level. 

We further clarify the distinction from existing server-side approaches. Prior work on label-flipping and backdoor trigger attacks \cite{knowledge_distillation} relies on detecting anomalous patterns at inference time, an approach that is orthogonal to ours, as trigger-based attacks introduce detectable input patterns whereas the gradient-based adversarial perturbations we consider are imperceptible and harder to flag. Server-side fine-tuning approaches \cite{server_finetune} apply trusted-data refinement primarily for performance improvement under non-IID conditions, not adversarial robustness. Defense-aware aggregation methods \cite{defense_aware_fl} modify the aggregation rule itself (e.g., filtering weights, using trimmed mean, etc.), whereas REVERB-FL operates between communication rounds without altering aggregation, making it compatible with any existing FL pipeline. To our knowledge, this is the first framework combining server-side reserve-set retraining with adversarial augmentation specifically for robust federated audio classification, supported by both empirical results and a convergence analysis that extends FedAvg guarantees under adversarial conditions.

\section{Methodology} \label{methods}
In this section, we first formalize our signal model (Sec.~\ref{subsec:signalmodel}) and the classifier architecture (Sec.~\ref{subsec:classifier}). We then describe the federated learning protocol (Sec.~\ref{subsec:federatedsetup}) and define the adversarial threat model for model poisoning attacks (Sec.~\ref{subsec:attacks}). Finally, we detail the REVERB-FL defense framework (Sec.~\ref{subsec:framework}), and provide a theoretical convergence analysis (Sec.~\ref{subsec:conv_an}).
\subsection{Signal Modeling}\label{subsec:signalmodel}
Audio signals are represented in the time-frequency domain to capture both spectral and temporal characteristics relevant to classification \cite{stft}.
Each utterance (or audio clip) is modeled as a discrete-time waveform, $\mathbf{x}$, where
$x[n]$ is the $n^{\text{th}}$ time sample of $\mathbf{x}$, at sampling rate $f_s$.  
Its short-time Fourier transform (STFT) is computed as
\begin{equation}
  X(\nu, \tau) = \sum_{n=-\infty}^{\infty} x[n]\,
  w[n-\tau]\,e^{-j 2\pi \nu n / F},
  \label{eq:stft}
\end{equation}
where $w[n]$ is a window function of length $L_w$ applied to each sample, $\tau$ indexes the frame, $\nu$ indexes the frequency bin, and $F$ is the fast Fourier transform (FFT) size.
Each STFT frame thus provides a localized spectral snapshot of the waveform.

The complex spectrogram is split into real and imaginary components, $\mathbf{X}(\cdot, \cdot, 1) = \text{Re}(X(\nu,\tau))$ and 
$\mathbf{X}(\cdot, \cdot, 2) = \text{Im}(X(\nu,\tau))$ (for compatibility with real-valued networks), 
forming a three-dimensional tensor
$\mathbf{X} \in \mathbb{R}^{n_f \times T \times 2}$,
where $n_f$ and $T$ denote frequency bins and time frames, respectively. While we adopt STFT-based representations following standard practice in audio FL~\cite{spectrograms, stft}, alternative representations such as perception-inspired or interpretable acoustic features~\cite{auditory_perception_disease, heart_sound_interpretable} may offer different robustness and interpretability tradeoffs. Since REVERB-FL operates on model parameters and the reserve set rather than the input representation directly, our framework can be generally applied to other domains and can transfer to other feature spaces without modification to the proposed defense mechanism.

We denote the global label set by $\mathcal{Y}=\{1,\dots,K\}$,  where $K$ is the number of classes.
Client $n$ holds a local dataset
$\mathcal{D}_n=\{(\mathbf{X}_i,y_i)\}_{i=1}^{D_n}$, with $D_n = |\mathcal{D}_n|$ samples, where data may follow client-specific or non-identical distributions.
This formulation accommodates both independent and identically distributed (IID) and non-IID label partitions, instantiated through random or Dirichlet label-skew splits \cite{dirichlet}, further discussed in
Sec.~\ref{sec:exp_setup}.

\subsection{Classifier Modeling}\label{subsec:classifier}
We adopt a deep neural network architecture widely used in spectrogram-based recognition tasks.
Let
$f(\cdot;\theta): \mathbb{R}^{n_f \times T \times 2} \rightarrow \mathbb{R}^K$
$y \in \{0,1\}^K$
denote the classifier parameterized by weights $\theta$, mapping each input
tensor $\mathbf{X}$ to a $K$-dimensional output vector representing class scores.
For a training pair $(\mathbf{X}, y)$ with one-hot encoded label
$y\!\in\!\{0,1\}^{K}$,
the cross-entropy loss is defined as
\begin{equation}
  \ell(\mathbf{X}, y;\theta) = -\sum_{k=1}^{K} y_k
  \log f_k(\mathbf{X};\theta),
  \label{eq:loss}
\end{equation}
where $f_k(\mathbf{X};\theta)$ denotes the predicted probability for class $k$, obtained by applying softmax to the network's output logits.

Training minimizes the empirical risk $\tfrac{1}{D_n}\sum_{(\mathbf{X},y)\in\mathcal{D}_n}\ell(\mathbf{X},y;\theta)$ at each client using stochastic gradient descent with regularization techniques such as weight decay and dropout. 

\subsection{Federated Learning Setup}\label{subsec:federatedsetup}
We operate under a closed-set classification assumption, in which the global label space $\mathcal{Y} = \{1,\dots,K\}$ is fixed and shared across all clients and the server-side reserve set $\mathcal{D}_r$; extensions to open-set scenarios are discussed in Sec.~\ref{sec:limitations}. This is consistent with standard federated audio classification settings \cite{fl_audio_challenges}, where the target class vocabulary is determined prior to deployment. We consider two data distribution scenarios: \emph{independent and identically distributed (IID)}, where all FL client's local label distribution matches the global distribution and \emph{non-IID}, where label distributions vary significantly across clients, such as only having samples from a subset of classes, or highly unbalanced class proportions. Non-IID partitions are common in audio FL due to speaker-specific devices or geographic clustering of sounds. We instantiate non-IID splits using Dirichlet label-skew with concentration $\alpha$ \cite{dirichlet}, evaluated in Sec.~\ref{results}.

Let $\mathcal{D}$ denote the complete dataset distributed across all clients. Before federated training begins, the server collects a stratified reserve set $\mathcal{D}_r$ (approximately 5\% of the total available data) by sampling from the clients -- a common practice in non-IID FL \cite{zhao2018federated}. Specifically, for each class $k \in \{1,\ldots,K\}$, we sample approximately $5\%$ of all instances of class $k$ uniformly at random across all clients, ensuring $\mathcal{D}_r$ maintains the global class distribution. This stratified sampling ensures the reserve objective closely approximates the global objective, yielding small approximation error in the convergence analysis (Sec.~\ref{subsec:conv_an}). The sampled data is transmitted to the server once before training and removed from the clients' local datasets. We denote the remaining local dataset at client $n$ after this removal as $\mathcal{D}_n$, ensuring $\mathcal{D}_r \cap \mathcal{D}_n = \emptyset$ for all $n$. This size is chosen to balance two competing considerations: a larger reserve set improves approximation of the global objective (reducing $\varepsilon_r$ in Proposition~\ref{prop:reserve}) but increases the one-time communication cost of transmitting data to the server. At 5\%, the payload is small enough to preserve client privacy and avoid compromising local dataset sizes, consistent with the shared data fraction used in \cite{zhao2018federated} to stabilize non-IID convergence. Since $\mathcal{D}_r$ is stratified and representative by construction, its quality can be validated prior to deployment; and because it is collected once before training begins, any mild imbalance has bounded impact on the mismatch term $\varepsilon_r$. Increasing $|\mathcal{D}_r|$ would improve performance at the cost of reducing privacy and incurring additional communication overhead (for transmitting raw data) until all data has been transmitted to the global server at which point the paradigm will shift to a centralized training scenario. Conversely, reducing $|\mathcal{D}_r|$ would decrease overall performance until $|\mathcal{D}_r| = 0$ at which point REVERB-FL would not be viable.

We consider a standard synchronous federated learning (FL) framework with a central server and $N$ distributed clients, indexed by $n \in \{1,\dots,N\}$. Each client $n$ possesses a local dataset
$\mathcal{D}_n = \{(\mathbf{X}_i, y_i)\}_{i=1}^{D_n}$, where $D_n = |\mathcal{D}_n|$ denotes the number of samples at client $n$, and the global objective function is defined as
\begin{equation}
  \varphi(\theta) = \frac{1}{N} \sum_{n=1}^{N} \varphi_n(\theta), \quad
  \varphi_n(\theta) = \mathbb{E}_{(\mathbf{X},y)\sim\mathcal{D}_n}
  [\ell(\mathbf{X}, y; \theta)],
  \label{eq:global_obj}
\end{equation}
where $\ell(\mathbf{X},y;\theta)$ is the cross-entropy loss in
\eqref{eq:loss}.  
The FL objective is to minimize $\varphi(\theta)$ without centralizing the data over $R$ communication rounds.

At each communication round $t$, the server samples a subset of
$m$ clients $S_t \subseteq \{1,\dots,N\}$ uniformly without replacement.
The selected clients receive the current global model
$\theta^{(t)}$ and perform $\tau$ local steps of stochastic gradient descent
(SGD) updates using their own data:
\begin{equation}
  \theta_{n}^{(t,j+1)} = \theta_{n}^{(t,j)} -
  \eta\,\nabla_{\theta}\ell(\mathbf{X}_{n}^{(t,j)}, y_{n}^{(t,j)}; \theta_{n}^{(t,j)}),
  \label{eq:local_sgd}
\end{equation}
for local step $j \in \{0,\dots,\tau-1\}$ and step size $\eta$.
After $\tau$ local updates, the client transmits its parameters
$\theta_{n}^{(t,\tau)}$ back to the server.

The server aggregates the received models using data-size weighted averaging
(FedAvg) \cite{fedAvg_baseline}:
\begin{equation}
  \theta^{(t+1)} =
\sum_{n \in S_t} \frac{D_n}{\sum_{k \in S_t} D_k}\,\theta_{n}^{(t,\tau)}.
  \label{eq:fedavg}
\end{equation}
This update rule is unbiased when clients are sampled uniformly at random, regardless of data distribution, and remains a standard
baseline in FL literature \cite{li2020convergencefedavgnoniiddata,wang2020tacklingobjectiveinconsistencyproblem}.

To improve stability, the server then performs additional reserve-set retraining
using a small trusted subset
$\mathcal{D}_{r} \subset \mathcal{D}$ of size $|\mathcal{D}_{r}|/|\mathcal{D}| \approx 5\%$.
After aggregation, $\theta^{(t+1)}$ is refined by $r$ SGD steps on
$\mathcal{D}_{r}$ before being broadcast to all clients in the next round.
This step mitigates the drift induced by non-IID client updates or adversarial
poisoning and preserves convergence guarantees, as analyzed in
Sec.~\ref{subsec:conv_an}. This setup follows standard synchronous FL protocols
\cite{scaffold,fl_survey}.

\subsection{Adversarial model poisoning}\label{subsec:attacks}
We adopt an untargeted training-time poisoning threat model in the STFT feature space. Given a sample $\mathbf{X} \in \mathcal{X}$, where $\mathcal{X}$ is the admissible set clipped element-wise to remain a valid audio signal, with label $y$ and current parameters $\theta$, the adversary seeks a perturbation $\boldsymbol{\delta}$ such that the perturbed example $\tilde{\mathbf{X}}=\mathbf{X}+\boldsymbol{\delta}$ maximizes the loss under an $\ell_\infty$ budget and feasibility constraint. Let $\mathbf{X}'$ denote a candidate perturbed input. Formally, we seek to find
\begin{equation}
  \tilde{\mathbf{X}} \;=\; \arg\max_{\mathbf{X}'\in\mathcal{X},\,\|\mathbf{X}'-\mathbf{X}\|_\infty\le\varepsilon}\;
  \ell(\mathbf{X}',y;\theta), 
  \label{eq:adv-problem}
\end{equation}
where $\varepsilon$ is the $\ell_\infty$ budget of the perturbation. However, \eqref{eq:adv-problem} is a highly non-linear optimization problem without an exact solution. Thus, we approximate a solution to \eqref{eq:adv-problem} using three potent gradient-based perturbation methods, originally developed for test-time evasion attacks \cite{fgsm, pgd}, which we adapt for training-time model poisoning.
Unless stated otherwise, all attacks are untargeted and operate on spectrogram tensors (not waveforms).

\paragraph*{FGSM (single-step)}
The fast gradient sign method performs one signed gradient ascent step on the input \cite{fgsm}:
\begin{equation}
  \tilde{\mathbf{X}} = \mathbf{X} + \varepsilon\,\mathrm{sign}\big(\nabla_{\mathbf{X}}\,\ell(\mathbf{X},y;\theta)\big),
  \label{eq:fgsm}
\end{equation}
where the result is clipped element-wise to the admissible set $\mathcal{X}$.

\paragraph*{PGD (iterative)}
Projected gradient descent applies $I$ perturbation iterations with step size
$\varepsilon/I$ and projection back to the $\ell_\infty$ ball \cite{pgd}. Formally, PGD is given by
\begin{equation}
  \tilde{\mathbf{X}}^{(i+1)} \;=\;
  \Pi_{\mathcal{B}_\varepsilon(\mathbf{X})\cap\mathcal{X}}
  \Big(\tilde{\mathbf{X}}^{(i)} + \frac{\varepsilon}{I}\,\mathrm{sign}\big(\nabla_{\mathbf{X}}\,\ell(\tilde{\mathbf{X}}^{(i)},y;\theta)\big)\Big),
  \label{eq:pgd}
\end{equation}
where $\tilde{\mathbf{X}}^{(0)}$ is initialized with a random start 
$\tilde{\mathbf{X}}^{(0)}=\mathbf{X}+\mathbf{u}$, where each element of $\mathbf{u}$ is sampled uniformly from $[-\varepsilon,\varepsilon]$.

\paragraph*{AWGN (stochastic corruption)}
Additive Gaussian white noise models environmental perturbations:
\begin{equation}
  \tilde{\mathbf{X}} = \mathbf{X} + \boldsymbol{n},\qquad \boldsymbol{n}\sim\mathcal{N}(0,\sigma^2 I),
  \label{eq:awgn}
\end{equation}
where $\boldsymbol{n}$ is Gaussian noise and the result is clipped element-wise to $\mathcal{X}$.

These attacks have been shown to significantly degrade performance in both centralized audio classification and federated settings \cite{robust_audio_adversarial, evasion_attacks_fl}.
We simulate training-time poisoning by designating a fixed adversarial set $A \subseteq \{1,\dots,N\}$ with $|A| = \lceil \rho N \rceil$, where $\rho$ is the adversarial fraction. In round $t$, when the server samples $m$ clients $S_t$, the fraction of adversarial clients among the $m$ sampled clients is $\beta_t = |S_t \cap A|/m$. These adversarial clients apply perturbations to their local training data before computing local updates. Perturbations are crafted using the client's current local model $\theta_n^{(t,j)}$ during local training. Labels remain unchanged, and the poisoned updates propagate through FedAvg aggregation to bias the global model (variant used is specified in Sec.~\ref{sec:exp_setup}). We note that our considered threat model focuses on the scenario in which adversarial clients poison in-distribution samples with in-distribution labels in an attempt to corrupt the training process. Alternatively, adversarial clients may instead inject out-of-distribution samples with in-distribution labels into the training process to induce training corruption. Such a scenario represents an open-set variant outside the scope of this work, discussed further in Sec.~\ref{sec:limitations}.

\subsection{Proposed Framework: REVERB-FL}\label{subsec:framework}
We propose Reserve-Enhanced Verification and Robustness in Federated Learning (\textbf{REVERB-FL}), a
lightweight server-centric defense framework designed to improve robustness of
federated audio models against any data-level perturbations injected during training-time. It is a general framework for any audio FL network containing zero or more adversarially poisoned clients whose data are poisoned according to the form $\tilde{\mathbf{X}}=\mathbf{X}+\boldsymbol{\delta}$ (where $\boldsymbol{\delta}$ can, but does not have to be, a gradient-based attack as reflected by the different types of poisoning attacks considered in Sec. \ref{subsec:attacks}).
Rather than modifying the aggregation rule or imposing costly client-side adversarial
training, REVERB-FL reinforces the global model after aggregation through two
complementary mechanisms: (i) \emph{reserve-set pretraining and retraining} on a small, trusted subset
of clean data and (ii) \emph{adversarial augmentation} of this reserve set using
gradient-based perturbations.

\paragraph*{(i) Reserve-set pretraining and retraining}
Let $\mathcal{D}_r$ denote the stratified reserve set (approximately 5\%) held at the server, collected by sampling from clients before federated training begins, with sampled data removed from client datasets to ensure there is no overlap. 
First, the global model is pretrained on the reserve set $\mathcal{D}_r$ for a small number of epochs before federated training. Then, after FedAvg aggregation in each round $t$ produces $\theta^{(t+1,0)}$, the server performs $r$ additional SGD steps on $\mathcal{D}_r$ given by
\begin{equation}
  \theta^{(t+1,s)} \leftarrow \theta^{(t+1,s-1)} - \eta_r \nabla_{\theta}\ell(\mathbf{X}_r^{(s)},y_r^{(s)};\theta^{(t+1,s-1)}),\label{eq:reserve-update}
\end{equation}
where $s=1,\ldots,r$, $(\mathbf{X}_r^{(s)},y_r^{(s)})\in\mathcal{D}_r$ are mini-batches, $\eta_r$ is the server learning rate, and the final model $\theta^{(t+1)} = \theta^{(t+1,r)}$ is broadcast to clients. Here $r$ corresponds to one epoch over $\mathcal{D}_r$ (see Sec.~\ref{sec:exp_setup}). This reserve update corrects bias accumulated from non-IID or poisoned client updates and provides an additional descent step that stabilizes convergence.

\paragraph*{(ii) Adversarial augmentation}
To further enhance robustness, the server generates adversarial variants $\tilde{\mathbf{X}}_r$ of reserve inputs $\mathbf{X}_r$ using the attacks defined in Sec.~\ref{subsec:attacks}. Each clean example $(\mathbf{X}_r,y_r)$ in the mini-batch is augmented with its adversarial variant $(\tilde{\mathbf{X}}_r,y_r)$ before reserve retraining, where $\tilde{\mathbf{X}}_r$ is generated using the attacks from Sec.~\ref{subsec:attacks}. This augmentation implicitly regularizes the model to maintain correct predictions within the perturbation neighborhood, providing adversarial invariance at the aggregation level without altering client-side computation. We evaluate reserve retraining with clean data only (Retrain (No Poison)), single-attack augmentation (Retrain (FGSM), Retrain (PGD), Retrain (AWGN)), and mixed augmentation (Retrain (All Adversarial)). Furthermore, note that at the client-level, the poisoning method can vary by client and between rounds. Thus, at the server, we do not make any assumptions about which clients trained on poisoned data or subsequently the poisoning approach used. Our complete training framework of REVERB-FL is summarized in Algorithm~\ref{alg:reverb}. While a fully adaptive adversary aware of the reserve-set mechanism is not evaluated here, robustness under unknown and varying adversarial strategies is assessed in Sec.~\ref{results} through two complementary evaluations: (i) cross-attack evaluation, where each retraining configuration is tested against mismatched attack types (e.g., Retrain (FGSM) evaluated under PGD poisoning), and (ii) mixed-attack evaluation, where each adversarial client randomly applies one of FGSM, PGD, or AWGN per round, simulating a heterogeneous adversarial population with no fixed strategy.

\begin{algorithm}[t]
\caption{REVERB-FL Training Protocol}
\label{alg:reverb}
\begin{algorithmic}[1]
\STATE \textbf{Input:} Reserve set $\mathcal{D}_r$, clients $\{1,\dots,N\}$ with datasets $\{\mathcal{D}_n\}$, rounds $R$, local steps $\tau$, reserve steps $r$
\STATE Pretrain $\theta^{(0)}$ on $\mathcal{D}_r$ for 3 epochs (see Sec.~\ref{sec:exp_setup})
\FOR{$t = 0$ to $R-1$}
    \STATE Server samples $m$ clients $S_t \subseteq \{1,\dots,N\}$
    \STATE Server broadcasts $\theta^{(t)}$ to clients in $S_t$
    \FOR{each client $n \in S_t$ \textbf{in parallel}}
        \STATE $\theta_n^{(t,0)} \leftarrow \theta^{(t)}$
        \FOR{$j = 0$ to $\tau-1$}
            \STATE Sample minibatch $(\mathbf{X}_n, y_n)$ from $\mathcal{D}_n$
            \STATE $\theta_n^{(t,j+1)} \leftarrow \theta_n^{(t,j)} - \eta \nabla_{\theta} \ell(\mathbf{X}_n, y_n;\theta_n^{(t,j)})$
        \ENDFOR
        \STATE Send $\theta_n^{(t,\tau)}$ to server
    \ENDFOR
    \STATE Server aggregates: $\theta^{(t+1, 0)} \leftarrow \sum_{n \in S_t} \frac{D_n}{\sum_{k \in S_t} D_k} \theta_n^{(t,\tau)}$
    \FOR{$s = 1$ to $r$}
        \STATE Sample minibatch $(\mathbf{X}_r^{(s)}, y_r^{(s)})$ from $\mathcal{D}_r$ (with optional adversarial augmentation)
        \STATE $\theta^{(t+1,s)} \leftarrow \theta^{(t+1,s-1)} - \eta_r \nabla_{\theta}\ell(\mathbf{X}_r^{(s)},y_r^{(s)};\theta^{(t+1,s-1)})$
    \ENDFOR
    \STATE $\theta^{(t+1)} \leftarrow \theta^{(t+1,r)}$
\ENDFOR
\STATE \textbf{Return:} Final global model $\theta^{(R)}$
\end{algorithmic}
\end{algorithm}

\subsection{Convergence Analysis}\label{subsec:conv_an}
Here, we analyze the convergence of REVERB-FL by deriving a bound that demonstrates faster convergence compared to baseline FedAvg under adversarial conditions. We employ standard assumptions from federated learning literature \cite{fedAvg_baseline, li2020convergencefedavgnoniiddata, wang2020tacklingobjectiveinconsistencyproblem}:

\begin{assumption}[Smoothness] \label{ass:smooth} 
Each local objective $\varphi_n(\theta)$ is $L$-smooth:
\[
\varphi_n(\theta') \le \varphi_n(\theta)
+ \langle \nabla \varphi_n(\theta),\, \theta'-\theta\rangle
+ \tfrac{L}{2}\|\theta'-\theta\|^2.
\]
\end{assumption}

\begin{assumption}[Bounded gradient variance] \label{ass:variance}
For stochastic mini-batch gradients during local SGD,
\[
\mathbb{E}\!\left[\|\nabla \varphi_n(\theta_n^{(t,j)})-\nabla \varphi_n(\theta_n)\|^2\right] \le \sigma_g^2.
\]
\end{assumption}

\begin{assumption}[Bounded client drift] \label{ass:drift}
Across clients,
\[
\mathbb{E}\!\left[\|\nabla \varphi_n(\theta^{(t)})-\nabla \varphi(\theta^{(t)})\|^2\right] \le \zeta^2,
\qquad\]

\[\text{where }~ \varphi(\theta)=\tfrac{1}{N}\sum_{n=1}^{N}\varphi_n(\theta).
\]
\end{assumption}

\begin{assumption}[Strong convexity] \label{ass:convex}
The global objective $\varphi$ is $\mu$-strongly convex:
\[
\varphi(\theta') \ge \varphi(\theta)
+ \langle \nabla \varphi(\theta),\, \theta'-\theta\rangle
+ \tfrac{\mu}{2}\|\theta'-\theta\|^2.
\]
\end{assumption}

\noindent\textbf{Remark.} 
While deep neural networks are generally non-convex, the strong convexity assumption (Assumption \ref{ass:convex}) is standard in federated learning convergence analyses\cite{dinh2022new, wang2020tacklingobjectiveinconsistencyproblem, scaffold} and enables tractable linear convergence rates. This assumption can be interpreted as a local property near stationary points during training or as characterizing favorable gradient descent conditions under which the algorithm moves towards convergence. Importantly, our theoretical analysis is intended to characterize the qualitative behavior of REVERB-FL over the course of training in terms of the contraction rate and steady-state error, rather than making quantitative claims about specific loss values. Even under non-convexity, the two-phase descent structure (FedAvg aggregation followed by reserve retraining) and the damping of adversarial bias through $(1-\mu\gamma_r)^r$ reflect dynamics that are empirically consistent with the observed convergence curves in Sec.~\ref{results}. Thus, even under the non-convexity assumption, we are able to capture the real-world improvement provided by REVERB-FL compared to standard FedAvg as corroborated by our empirical results.

\medskip
\noindent\textbf{Notation.} We denote the effective aggregation step size as $\gamma_g = \eta\tau$ (where $\eta$ is the client learning rate and $\tau$ is the number of local SGD steps from \eqref{eq:local_sgd}) and the reserve step size as $\gamma_r = \eta_r$ (the server learning rate from \eqref{eq:reserve-update}).

\noindent\textbf{Attack model.}
Before training, a fixed adversarial subset $A\subseteq\{1,\dots,N\}$ with $|A|=\rho N$ is chosen. 
In round $t$, the server samples $m$ participants $S_t$ uniformly without replacement, and the fraction of adversarial clients among the selected set is
$\beta_t = |S_t \cap A|/m$ with $\mathbb{E}[\beta_t] = \rho$ and $\mathbb{E}[\beta_t^2] = \rho^2 + \rho(1-\rho)\tfrac{N-m}{m(N-1)}$.
Each adversarial client may bias its gradient by at most
$\|\nabla \varphi_n^{\mathrm{adv}}(\theta) - \nabla \varphi_n(\theta)\|_2 \le \Gamma$, where $\Gamma = C_{\varepsilon}\varepsilon$ for a problem-dependent constant $C_{\varepsilon} > 0$ that scales the $\ell_\infty$ perturbation budget $\varepsilon = \|\boldsymbol{\delta}\|_\infty$ to the gradient bias magnitude.

The gradient bias bound $\Gamma$ depends on the attack strategy, perturbation budget, and model smoothness. For gradient-based poisoning attacks with $\ell_\infty$ perturbation budget $\varepsilon$, we can bound the gradient bias using the chain rule and smoothness of the loss. Specifically, if the loss $\ell(\cdot, y; \theta)$ is $L_{\ell}$-Lipschitz continuous in its first argument, then for a perturbed input $\tilde{\mathbf{X}} = \mathbf{X} + \boldsymbol{\delta}$ with $\|\boldsymbol{\delta}\|_\infty \le \varepsilon$, the gradient bias satisfies
\begin{equation}
\begin{aligned}
\|\nabla_\theta \ell(\tilde{\mathbf{X}}, y; \theta) - \nabla_\theta \ell(\mathbf{X}, y; \theta)\|_2 
&\le L_\ell \cdot \|\nabla_{\mathbf{X}} \nabla_\theta \ell\|_2 \cdot \|\boldsymbol{\delta}\|_2\\
&\le L_\ell \cdot \|\nabla_{\mathbf{X}} \nabla_\theta \ell\|_2 \cdot \sqrt{d} \cdot \varepsilon,
\end{aligned}
\end{equation}
where $d$ is the dimensionality of the input and the inequality uses $\|\boldsymbol{\delta}\|_2 \le \sqrt{d} \varepsilon$. In our analysis, we define $\Gamma$ as an upper bound on this quantity, which depends on both $\varepsilon$ and the model's Lipschitz properties.

\noindent\textbf{Reserve set.}
Server-side reserve SGD is unbiased with gradient variance $\sigma_r^2$ and small mismatch $\|\nabla \varphi_r(\theta)-\nabla \varphi(\theta)\|\le \varepsilon_r$. 

\medskip
\noindent We use the following algebraic decomposition of the local gradient.

\begin{lemma}[One-round descent]\label{lem:descent}
Under Assumptions~\ref{ass:smooth}--\ref{ass:drift}, after one training round with aggregation step size $\gamma_g \le 1/L$ and $r$ reserve-set SGD steps of size $\gamma_r \le 1/L$, the expected optimality gap contracts as
\begin{equation}
\begin{aligned}
\mathbb{E}[\varphi(\theta^{(t+1)}) - \varphi^\star] &\le (1-\mu\gamma_g)(1-\mu\gamma_r)^r\,\mathbb{E}[\varphi(\theta^{(t)}) - \varphi^\star] \\ &+C_g + C_r,
\label{eq:descent}
\end{aligned}
\end{equation}
where $C_g$ and $C_r$ are constants depending on $\sigma_g^2, \zeta^2, \Gamma^2$, and $\sigma_r^2$, with optimal minimizer value $\varphi^\star=\min_\theta \varphi(\theta)$.
\end{lemma}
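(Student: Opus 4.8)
The plan is to treat one round as the composition of two contractive operations---the FedAvg aggregation step followed by the $r$ reserve-set SGD steps---and to bound each factor separately before chaining them. First I would analyze the aggregation step, showing that the effective update from $\theta^{(t)}$ to the aggregated model $\theta^{(t+1,0)}$ behaves like an inexact gradient step on $\varphi$ with effective step size $\gamma_g=\eta\tau$. Writing the aggregate as a perturbed gradient step, I would collect four sources of error: the stochastic mini-batch noise (controlled by $\sigma_g^2$ via Assumption~\ref{ass:variance}), the client drift from heterogeneous local objectives (controlled by $\zeta^2$ via Assumption~\ref{ass:drift}), the client-sampling variance from drawing $S_t$ of size $m$, and the adversarial gradient bias. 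Applying $L$-smoothness (Assumption~\ref{ass:smooth}) and then converting the squared-gradient term into the optimality gap through the gradient-domination (Polyak--\L{}ojasiewicz) inequality implied by $\mu$-strong convexity (Assumption~\ref{ass:convex}), while requiring $\gamma_g\le 1/L$ so the quadratic term is dominated, yields
\begin{equation*}
\mathbb{E}[\varphi(\theta^{(t+1,0)}) - \varphi^\star] \le (1-\mu\gamma_g)\,\mathbb{E}[\varphi(\theta^{(t)}) - \varphi^\star] + C_g,
\end{equation*}
where $C_g$ aggregates the $\sigma_g^2$, $\zeta^2$, and adversarial terms. For the adversarial contribution I would use that the aggregated bias over the sampled clients is at most $\beta_t\Gamma$, so squaring and taking expectation over the sampling introduces the second moment $\mathbb{E}[\beta_t^2]=\rho^2+\rho(1-\rho)\tfrac{N-m}{m(N-1)}$ already supplied in the attack model, producing a $\Gamma^2$-scaled constant absorbed into $C_g$.

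Next I would handle the reserve phase. Each of the $r$ steps in \eqref{eq:reserve-update} is a standard SGD step on the reserve objective $\varphi_r$, whose stochastic gradients are unbiased with variance $\sigma_r^2$ and differ from $\nabla\varphi$ by at most $\varepsilon_r$. Decomposing $\nabla\varphi_r = \nabla\varphi + (\nabla\varphi_r-\nabla\varphi)$ and reusing the smoothness-plus-strong-convexity argument with $\gamma_r\le 1/L$ gives the per-step contraction
\begin{equation*}
\mathbb{E}[\varphi(\theta^{(t+1,s)}) - \varphi^\star] \le (1-\mu\gamma_r)\,\mathbb{E}[\varphi(\theta^{(t+1,s-1)}) - \varphi^\star] + c_r,
\end{equation*}
with $c_r$ depending on $\sigma_r^2$ and $\varepsilon_r^2$. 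Unrolling this recursion over $s=1,\dots,r$ produces the factor $(1-\mu\gamma_r)^r$ together with a geometric sum $c_r\sum_{i=0}^{r-1}(1-\mu\gamma_r)^i\le C_r$. Substituting the aggregation bound into the unrolled reserve bound and using $(1-\mu\gamma_r)^r\le 1$ to absorb the cross term $(1-\mu\gamma_r)^r C_g$ back into $C_g$ then gives exactly \eqref{eq:descent}.

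I expect the main obstacle to be the aggregation step rather than the reserve step. Controlling the drift of the local iterates $\theta_n^{(t,j)}$ away from $\theta^{(t)}$ accumulated over the $\tau$ local SGD steps---and thereby justifying the identification of the multi-step local update with a single effective step of size $\gamma_g=\eta\tau$---is the delicate part, since the drift bound couples $\sigma_g^2$, $\zeta^2$, and $\eta$ and must be kept of order $\gamma_g^2$ so that it enters only through the constant $C_g$ and does not corrupt the $(1-\mu\gamma_g)$ contraction factor. The client-sampling and adversarial-sampling randomness further complicate the expectations, since both the participation set $S_t$ and the realized adversarial fraction $\beta_t$ are random; by contrast, once the aggregation bound is established, the reserve-phase recursion is a routine strongly-convex SGD telescoping.
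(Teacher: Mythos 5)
Your proposal follows essentially the same route as the paper's proof: both treat the round as a FedAvg half-step viewed as an inexact gradient step on $\varphi$ (with the error's second moment split into stochastic, drift, and $\mathbb{E}[\beta_t^2]\Gamma^2$ poisoning terms), apply smoothness plus the PL inequality from strong convexity to get the $(1-\mu\gamma_g)$ contraction, and then unroll $r$ reserve SGD steps to obtain the $(1-\mu\gamma_r)^r$ factor before composing the two bounds. The only cosmetic differences are that the paper makes the cross-term handling explicit via Young's inequality (yielding $c_g(\gamma_g)=\tfrac{\gamma_g}{2a}+\tfrac{L\gamma_g^2}{2}$) and keeps the contracted factor $(1-\mu\gamma_r)^r$ on the aggregation constant rather than absorbing it, neither of which changes the substance.
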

\begin{proof}
See Appendix~\ref{appendix:proofs}.
\end{proof}

\begin{figure*}[!t]
  \centering
  \subfloat[\textnormal{\textit{Clean IID}}]{%
    \includegraphics[width=0.49\textwidth]{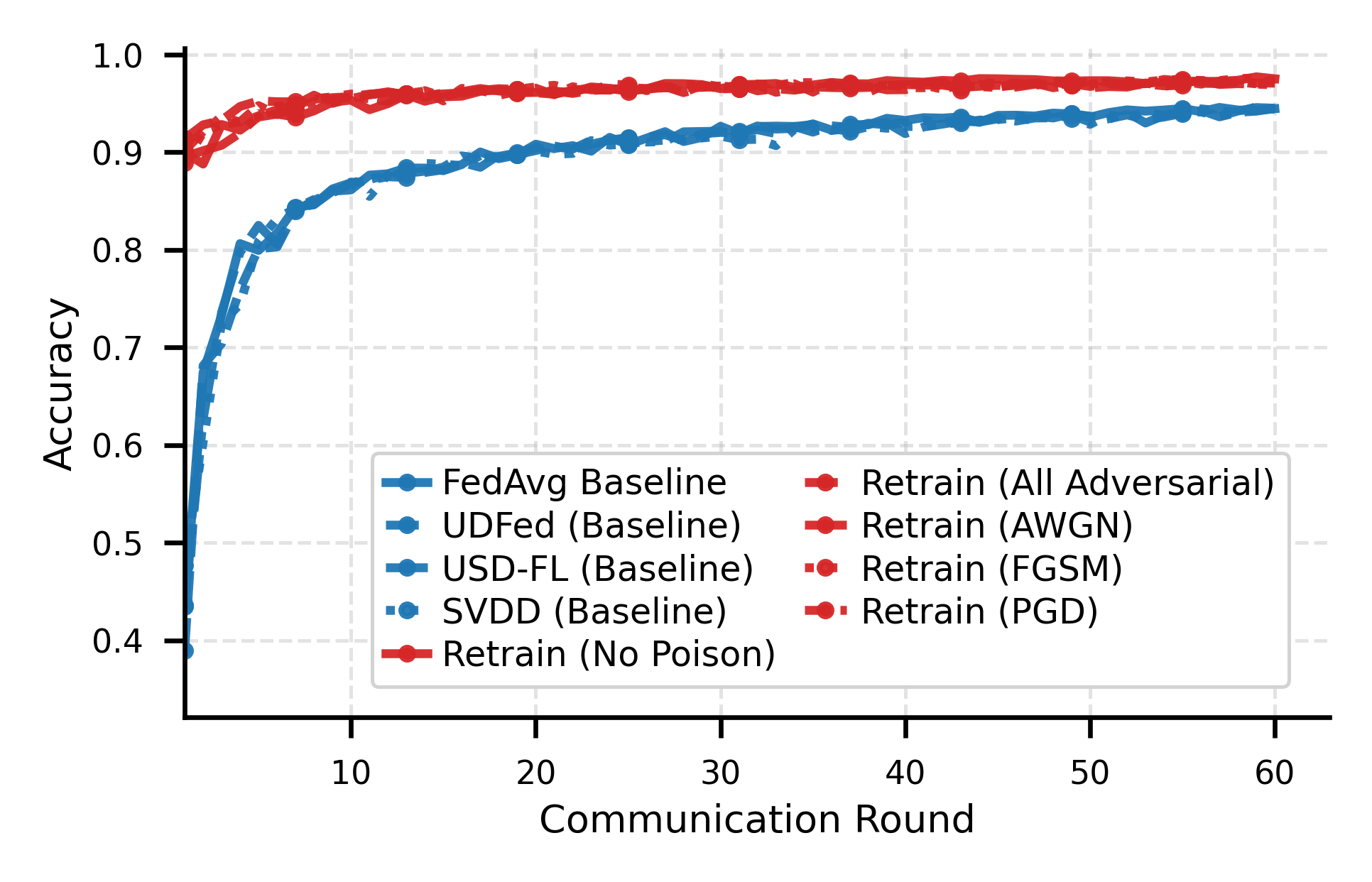}%
    \label{fig:iid_no_poison}}
  \hfill
  \subfloat[\textnormal{\textit{PGD IID}}]{%
    \includegraphics[width=0.49\textwidth]{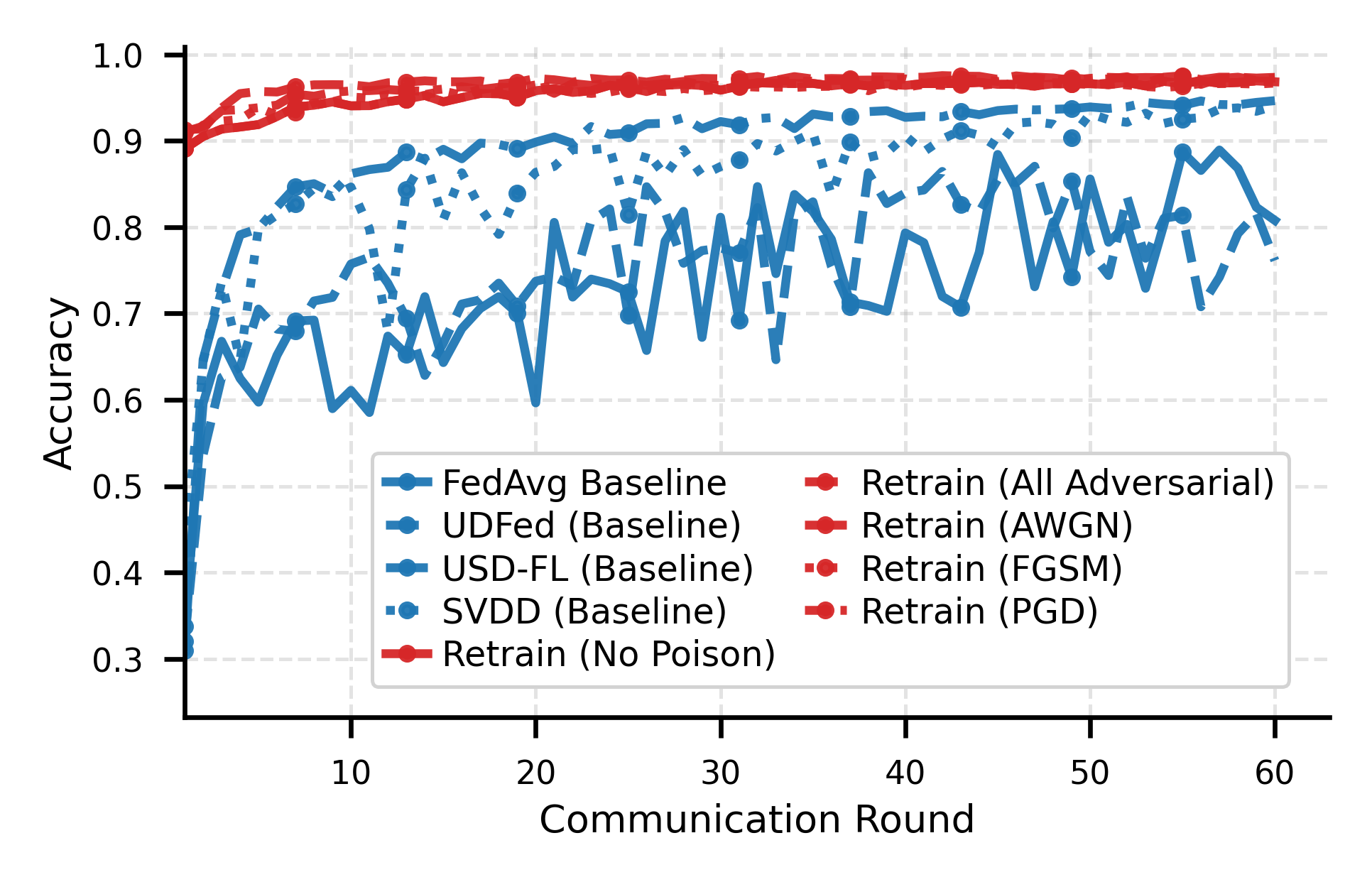}%
    \label{fig:iid_pgd}}

  \vspace{0.6em}

  \subfloat[\textnormal{\textit{AWGN IID}}]{%
    \includegraphics[width=0.49\textwidth]{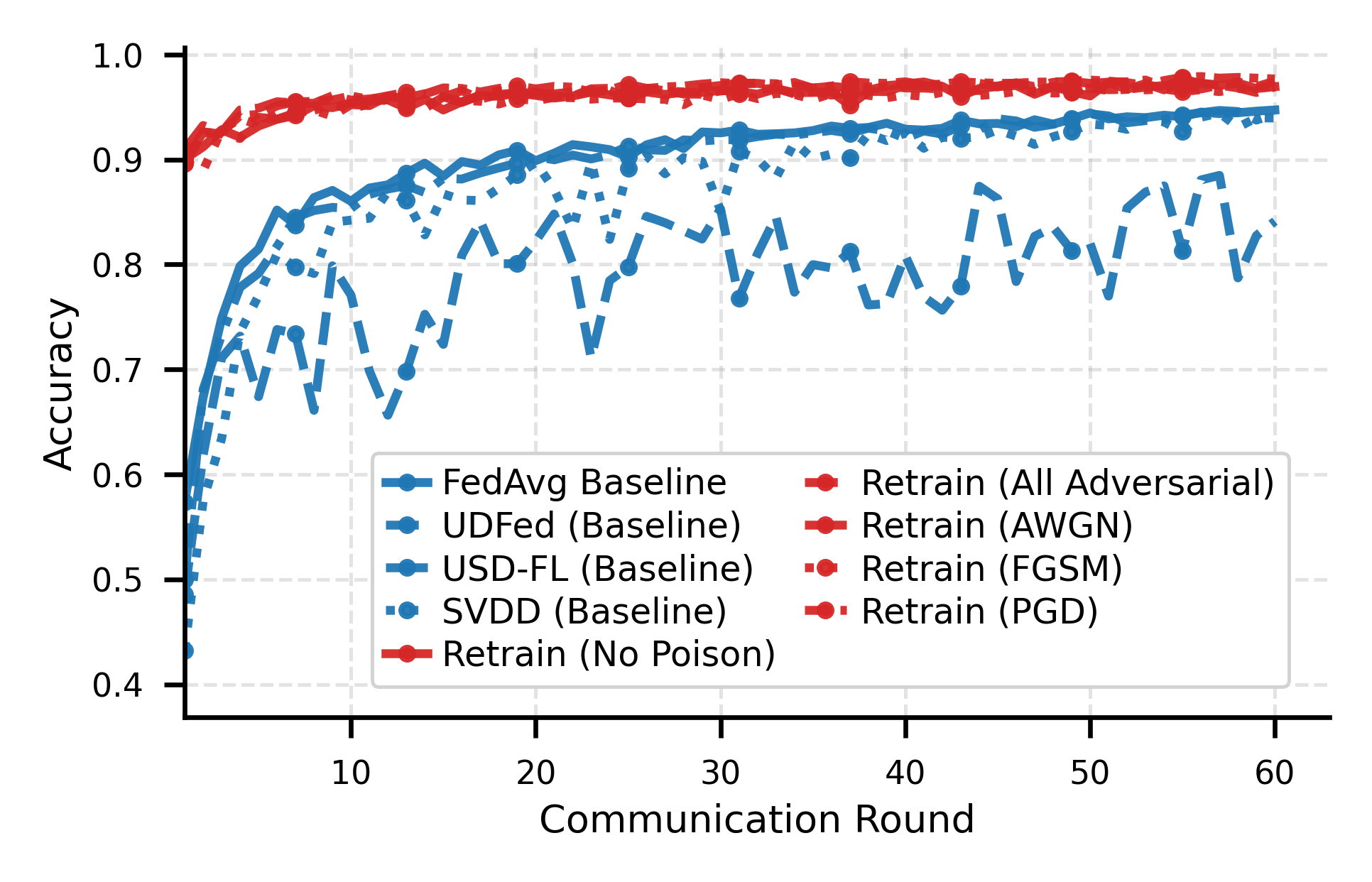}%
    \label{fig:iid_awgn}}
  \hfill
  \subfloat[\textnormal{\textit{FGSM IID}}]{%
    \includegraphics[width=0.49\textwidth]{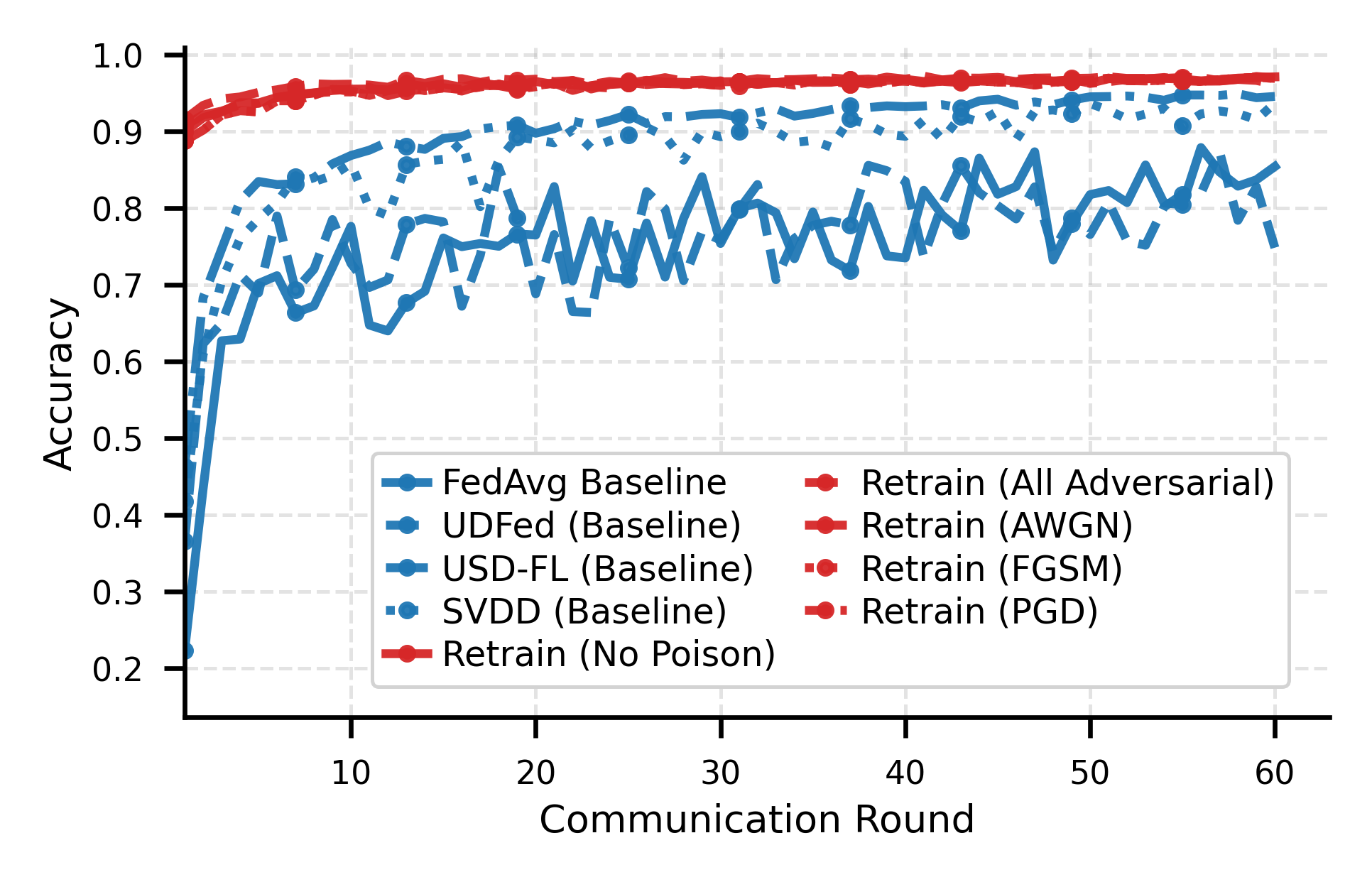}%
    \label{fig:iid_fgsm}}
  \caption{Global accuracy over communication rounds on AudioMNIST dataset under (a) clean, (b) PGD, (c) AWGN, and (d) FGSM poisoning attacks with IID data partition, comparing baselines with REVERB-FL framework methods (Retrain).}
  \label{fig:amnist_iid_2x2}
\end{figure*}

\begin{theorem}[Round-wise contraction with reserve retraining] \label{thm:conv-main}
Let the global objective $\varphi$ be $L$-smooth and $\mu$-strongly convex (Assumptions~\ref{ass:smooth}--\ref{ass:convex}).
In each communication round $t$, FedAvg is performed with effective step
$\gamma_g \!\le\! 1/L$, followed by $r$ unbiased reserve-set SGD steps
of size $\gamma_r \!\le\! 1/L$. Then
\begin{equation}
\mathbb{E}\!\left[\varphi(\theta^{(t+1)})-\varphi^\star\right]
\;\le\;
q\,\mathbb{E}\!\left[\varphi(\theta^{(t)})-\varphi^\star\right]
\;+\;
C',
\label{eq:main_bound}
\end{equation}
where the contraction factor and residual constant are
\begin{align}
q &= (1-\mu\gamma_g)\,(1-\mu\gamma_r)^{r}, \nonumber\\[3pt]
C' &= (1-\mu\gamma_r)^{r}\,
c_g(\gamma_g)\left(
\tfrac{c_s}{m}\sigma_g^2
+ c_\tau\,\zeta^2
+ \mathbb{E}[\beta_t^2]\,\Gamma^2
\right)\nonumber\\[3pt] &+ \tfrac{L\,\gamma_r^{2}\,r}{2}\,\sigma_r^2
\label{eq:main_constants}
\end{align}
with $c_g(\gamma_g)=\tfrac{\gamma_g}{2a}+\tfrac{L\gamma_g^2}{2}$ and $c_\tau=\tfrac{\tau(\tau-1)}{2}\eta^2 L^2$ for any $a \in (0,1)$.
\end{theorem}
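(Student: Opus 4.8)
The plan is to establish \eqref{eq:main_bound} by composing two per-stage contractions and multiplying them, exactly in the spirit of Lemma~\ref{lem:descent}: one contraction for the FedAvg aggregation that produces $\theta^{(t+1,0)}$, and one for the block of $r$ reserve-set SGD steps that maps $\theta^{(t+1,0)}$ to $\theta^{(t+1)}=\theta^{(t+1,r)}$. The crux is to track each source of noise through a smoothness/strong-convexity descent argument so that the three aggregation-noise terms and the single reserve-noise term emerge with the stated constants.

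First I would analyze the aggregation stage. Writing $\theta^{(t+1,0)}=\theta^{(t)}-\gamma_g\hat g^{(t)}$ for the effective normalized aggregate direction $\hat g^{(t)}$, I would decompose $\hat g^{(t)}=\nabla\varphi(\theta^{(t)})+e_{\mathrm{drift}}+e_{\mathrm{samp}}+e_{\mathrm{adv}}$, separating the full-gradient term from (i) the multi-local-step drift, (ii) the client-sampling and mini-batch stochasticity, and (iii) the adversarial gradient bias. Applying $L$-smoothness (Assumption~\ref{ass:smooth}) gives $\varphi(\theta^{(t+1,0)})\le\varphi(\theta^{(t)})-\gamma_g\langle\nabla\varphi(\theta^{(t)}),\hat g^{(t)}\rangle+\tfrac{L\gamma_g^2}{2}\|\hat g^{(t)}\|^2$; I would take conditional expectation, use Young's inequality with the free parameter $a$ to split the bias cross-term into an $\tfrac{a}{2}$-fraction of $\|\nabla\varphi\|^2$ plus $\tfrac{\gamma_g}{2a}\|e_{\mathrm{drift}}+e_{\mathrm{adv}}\|^2$, and absorb the remaining gradient-norm term via the Polyak--Łojasiewicz consequence of $\mu$-strong convexity (Assumption~\ref{ass:convex}), $\|\nabla\varphi\|^2\ge 2\mu(\varphi-\varphi^\star)$. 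The variance bound (Assumption~\ref{ass:variance}) then contributes $\tfrac{c_s}{m}\sigma_g^2$ through the sampling-without-replacement second moment, the drift bound (Assumption~\ref{ass:drift}) contributes $c_\tau\zeta^2$ with $c_\tau=\tfrac{\tau(\tau-1)}{2}\eta^2L^2$ from summing the per-step accumulation $\sum_j j$, and the attack model contributes $\mathbb{E}[\beta_t^2]\Gamma^2$ by squaring the $\beta_t$-weighted per-client bias bound $\Gamma$ and taking expectation over $S_t$. Collecting these yields $\mathbb{E}[\varphi(\theta^{(t+1,0)})-\varphi^\star]\le(1-\mu\gamma_g)\,\mathbb{E}[\varphi(\theta^{(t)})-\varphi^\star]+c_g(\gamma_g)\big(\tfrac{c_s}{m}\sigma_g^2+c_\tau\zeta^2+\mathbb{E}[\beta_t^2]\Gamma^2\big)$, where $\gamma_g\le 1/L$ lets me bound the leading factor cleanly by $(1-\mu\gamma_g)$ and $c_g(\gamma_g)=\tfrac{\gamma_g}{2a}+\tfrac{L\gamma_g^2}{2}$ collects the Young split and the second-order smoothness term.

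Next I would handle the reserve block. Each reserve step is an unbiased SGD step with variance $\sigma_r^2$ (per the stated reserve-set properties), and treating the small mismatch $\varepsilon_r$ as negligible under stratified sampling, the standard strongly-convex one-step SGD descent gives $\mathbb{E}[\varphi(\theta^{(t+1,s)})-\varphi^\star]\le(1-\mu\gamma_r)\,\mathbb{E}[\varphi(\theta^{(t+1,s-1)})-\varphi^\star]+\tfrac{L\gamma_r^2}{2}\sigma_r^2$ for $\gamma_r\le 1/L$. Unrolling $s=1,\dots,r$ produces the contraction factor $(1-\mu\gamma_r)^r$ on the input gap and a residual $\sum_{s=1}^{r}(1-\mu\gamma_r)^{r-s}\tfrac{L\gamma_r^2}{2}\sigma_r^2\le\tfrac{L\gamma_r^2 r}{2}\sigma_r^2$, using $(1-\mu\gamma_r)^{r-s}\le 1$.

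Finally I would compose the two stages: substituting the aggregation bound for $\mathbb{E}[\varphi(\theta^{(t+1,0)})-\varphi^\star]$ into the unrolled reserve bound multiplies the aggregation residual by $(1-\mu\gamma_r)^r$ and leaves the reserve residual additively, giving precisely $q=(1-\mu\gamma_g)(1-\mu\gamma_r)^r$ and the stated $C'$. The main obstacle is the aggregation stage: obtaining the exact drift constant $c_\tau=\tfrac{\tau(\tau-1)}{2}\eta^2L^2$ requires carefully bounding how parameter divergence across $\tau$ local steps accumulates under $L$-smoothness, and cleanly separating the adversarial bias (which couples to $\beta_t$ and must be controlled in second moment as $\mathbb{E}[\beta_t^2]$) from the zero-mean sampling noise, so that only the intended terms enter the squared-norm and cross-term estimates.
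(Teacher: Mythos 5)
Your proposal follows essentially the same route as the paper's proof: decomposing the FedAvg update as an inexact gradient step with stochastic, drift, and poisoning error components, bounding the error's second moment, applying the smoothness descent lemma with Young's inequality and the PL consequence of strong convexity to get the $(1-\mu\gamma_g)$ half-step contraction with residual $c_g(\gamma_g)\bigl(\tfrac{c_s}{m}\sigma_g^2 + c_\tau\zeta^2 + \mathbb{E}[\beta_t^2]\Gamma^2\bigr)$, then unrolling $r$ per-step reserve contractions and composing the two stages. The only cosmetic difference is that you drop the reserve mismatch $\varepsilon_r$ as negligible (matching the theorem's stated $C'$), whereas the paper's appendix carries it through as an extra $(1-\mu\gamma_r)^r\varepsilon_r^2$ term in $C'$.
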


\begin{proof}
Applying strong convexity (Assumption~\ref{ass:convex}) to Lemma~\ref{lem:descent} yields the contraction rate $q = (1-\mu\gamma_g)(1-\mu\gamma_r)^r < 1$. Unrolling the recursion and bounding the geometric series gives~\eqref{eq:main_bound} with steady-state constant~\eqref{eq:main_constants}. See Appendix~\ref{appendix:proofs} for full proof.
\end{proof}

Setting $r\!=\!0$ recovers the baseline FedAvg rate with $q_{\mathrm{FA}}=1-\mu\gamma_g$. Since $(1-\mu\gamma_r)^{r}\!<\!1$ for any $r\!\ge\!1$, the proposed reserve retraining yields faster contraction and a smaller steady-state error compared to baseline FedAvg. The constant $C'$ combines stochastic variance (scaled by $c_s/m$), client heterogeneity ($c_\tau\zeta^2$), and adversarial bias $\Gamma^2 = (C_{\varepsilon}\varepsilon)^2$ scaled by $\mathbb{E}[\beta_t^2]$ (where $\varepsilon = \|\boldsymbol{\delta}\|_\infty$ is the perturbation budget); reserve updates dampen these through $(1-\mu\,\gamma_r)^r$ while adding a small $\sigma_r^2$ term. 

\section{Performance Evaluation} \label{results}
In this section, we first describe our experimental setup and FL architecture (Sec.~\ref{sec:exp_setup}). Next, we evaluate the efficacy of our framework under model poisoning attacks on IID data partitions (Sec.~\ref{iid_res}) and non-IID data partitions (Sec.~\ref{non-res}). We compare REVERB-FL variants against multiple baselines, reporting global model accuracy on clean test inputs and under each considered poisoning attacks. 

\begin{figure*}[!t]
  \centering
  \subfloat[\textnormal{\textit{Clean IID}}]{%
    \includegraphics[width=0.49\textwidth]{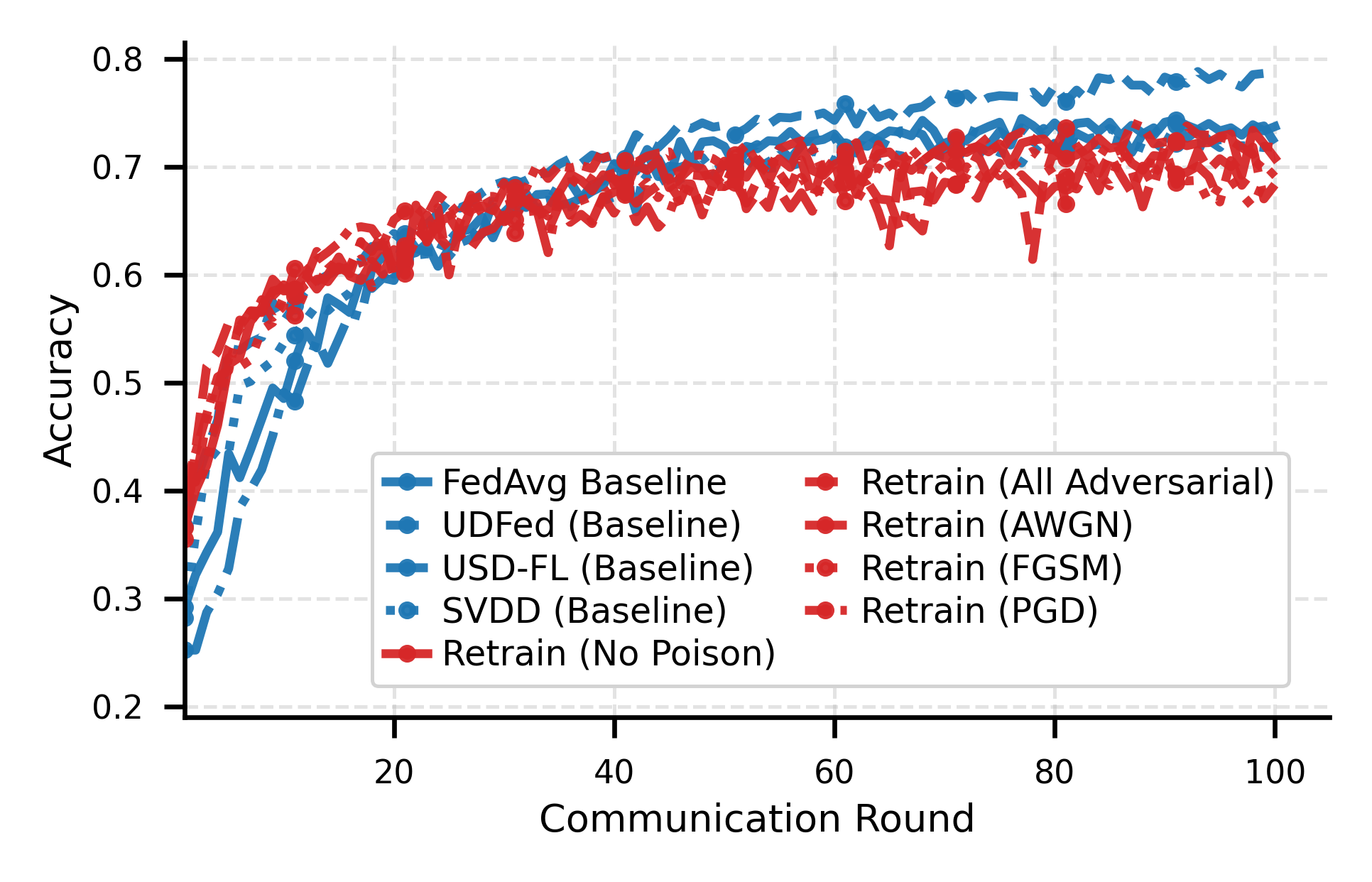}%
    \label{fig:us_iid_no_poison}}
  \hfill
  \subfloat[\textnormal{\textit{PGD IID}}]{%
    \includegraphics[width=0.49\textwidth]{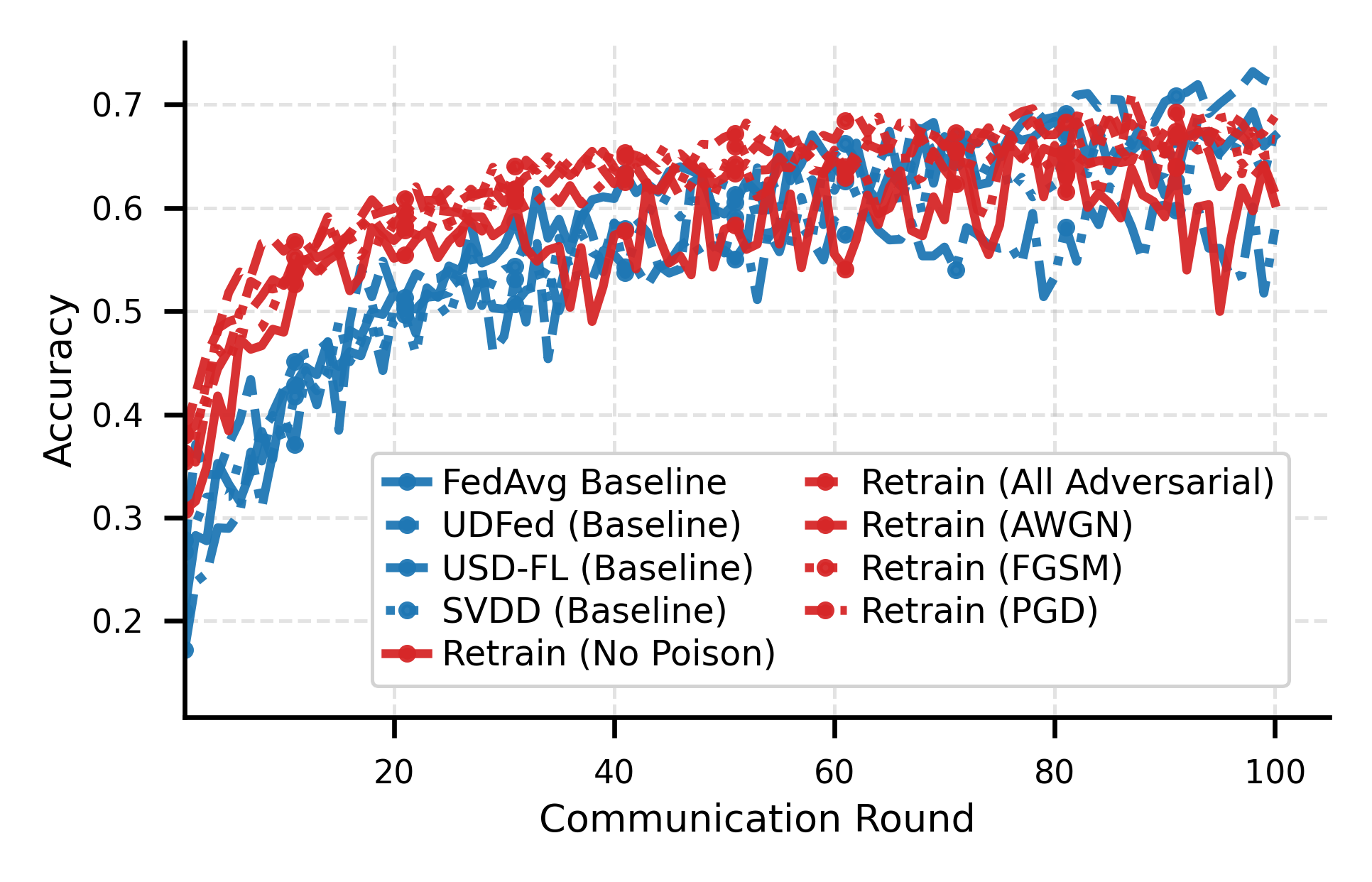}%
    \label{fig:us_iid_pgd}}

  \vspace{0.6em}

  \subfloat[\textnormal{\textit{AWGN IID}}]{%
    \includegraphics[width=0.49\textwidth]{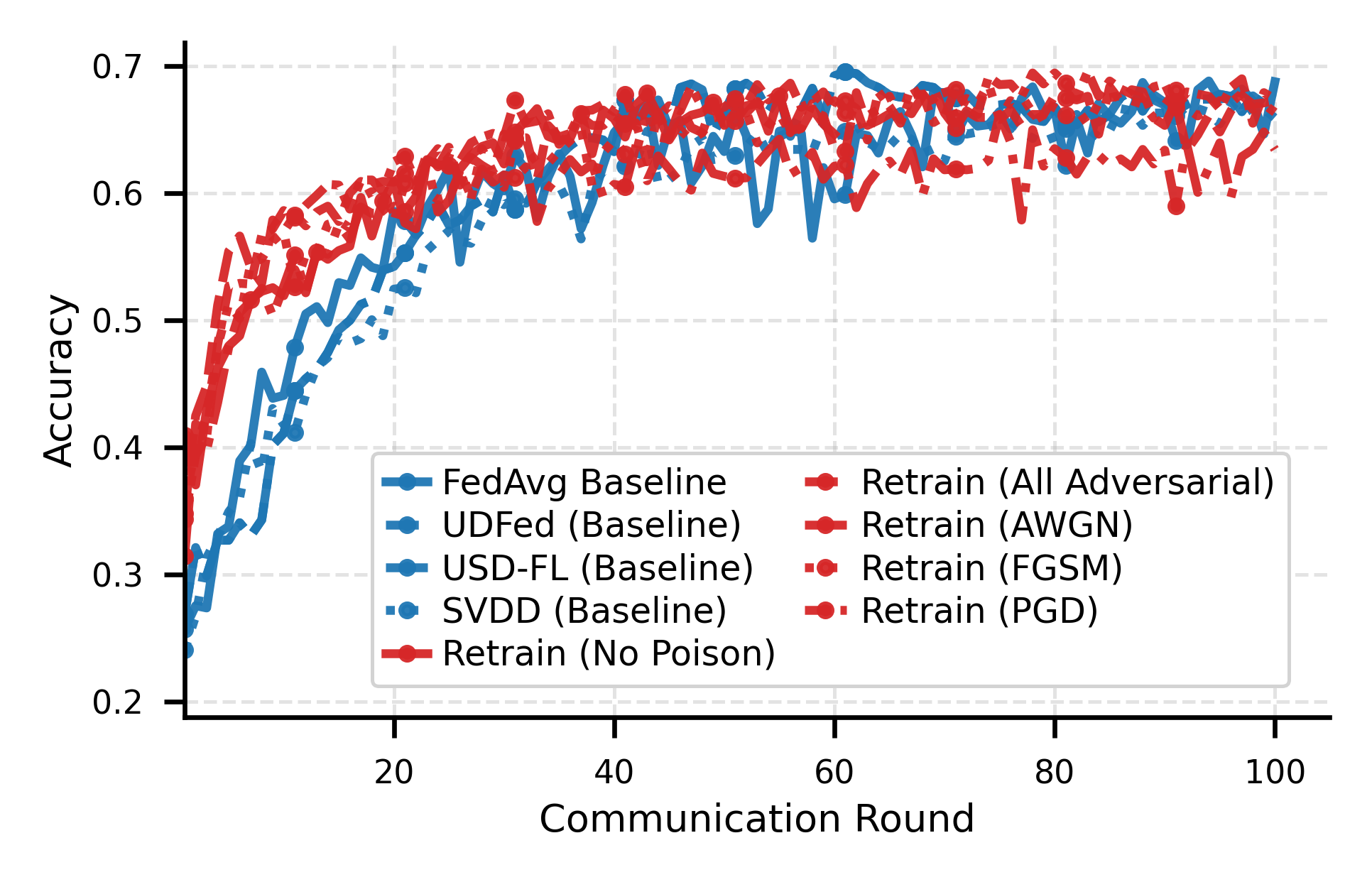}%
    \label{fig:us_iid_awgn}}
  \hfill
  \subfloat[\textnormal{\textit{FGSM IID}}]{%
    \includegraphics[width=0.49\textwidth]{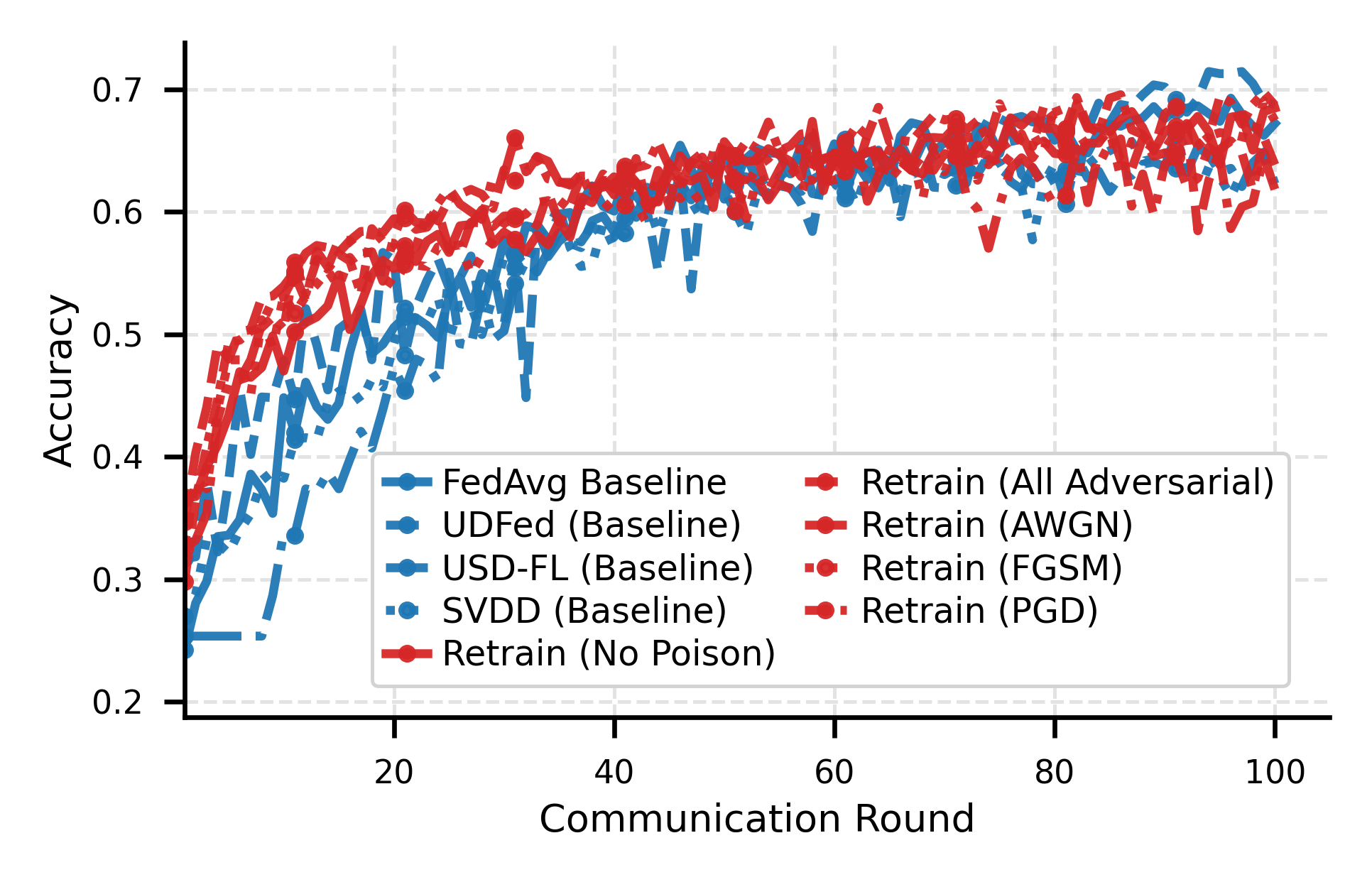}%
    \label{fig:us_iid_fgsm}}
  \caption{Global accuracy over communication rounds on UrbanSound8K dataset under (a) clean, (b) PGD, (c) AWGN, and (d) FGSM poisoning attacks with IID data partition, comparing baselines with REVERB-FL framework methods (Retrain).}
  \label{fig:us8k_iid_2x2}
\end{figure*}

\subsection{Experimental Setup} \label{sec:exp_setup}
We evaluate REVERB-FL under a \emph{training-time data-poisoning} setting, where, in each communication round, a fixed subset of selected clients perturb their local \emph{training} data to create poisoned data. Poisoned clients keep labels unchanged while the server and clean clients are honest. Global evaluation uses the test set.

\textit{Poisoning mechanisms.}
We instantiate three perturbation families applied to client-side training inputs:
(i) \textbf{FGSM-poison} with $\ell_\infty$ budget $\varepsilon = 0.02$, crafting adversarial examples against the current global model;
(ii) \textbf{PGD-poison} with $\varepsilon = 0.02$, step size $\varepsilon/50$, and $50$ iterations;
(iii) \textbf{AWGN-poison} by adding $\boldsymbol{n} \sim \mathcal{N}(0,\sigma^2 I)$ with $\sigma = 0.03$ and clipping to $[0,1]$.
We designate a fixed adversarial fraction $\rho = 0.5$ (i.e., 50\% of clients apply poisoning) across all experiments. Attack parameters were selected via sensitivity analysis on centralized models, where these values produce significant accuracy degradation (e.g., FGSM reduces accuracy to ~20-30\% at $\varepsilon = 0.02$) while maintaining realistic perturbation budgets for audio spectrograms.
We fix the poisoned-client ratio per round and report it alongside results.

\textit{Defense (server-side).}
REVERB-FL defends via a 5\% \emph{server reserve set}, obtained by stratified sampling from the clients. The sampled data is transmitted to the server and removed from client datasets, ensuring class balance and disjointness from all local training data. The reserve set is used for: (a) pretraining the global model for 3 epochs before round~1; and (b) one epoch of server-side retraining after each aggregation, performing $r$ SGD steps corresponding to one epoch over $\mathcal{D}_r$ with batch size $B_r = 32$ (compared to client local SGD with $\tau$ steps and batch size $B = 16$). Reserve retraining uses either clean reserve batches (\textbf{Retrain (No Poison)}) or adversarially augmented reserve batches generated with FGSM, PGD, AWGN, or a mixture (\textbf{Retrain (FGSM)}/\textbf{Retrain (PGD)}/\textbf{Retrain (AWGN)}/\textbf{Retrain (All Adversarial)}). The full \textbf{REVERB-FL} configuration combines Retrain with \textbf{All Adversarial}. The server-side retraining overhead is lightweight and predictable: $\mathcal{D}_r$ is collected once before training begins and remains fixed throughout, so the number of reserve SGD steps per round $r = \lceil|\mathcal{D}_r|/B_r\rceil$ is constant across all communication rounds. The retraining cost is therefore determined entirely at initialization and does not grow during training, making it straightforward to budget in practice. For very large federations, the absolute size of $\mathcal{D}_r$ can be capped at a fixed budget independent of total dataset size, trading a slightly larger $\varepsilon_r$ for reduced server overhead.

\begin{figure*}[!t]
  \centering
  \subfloat[\textnormal{\textit{Clean non-IID}}]{%
    \includegraphics[width=0.49\textwidth]{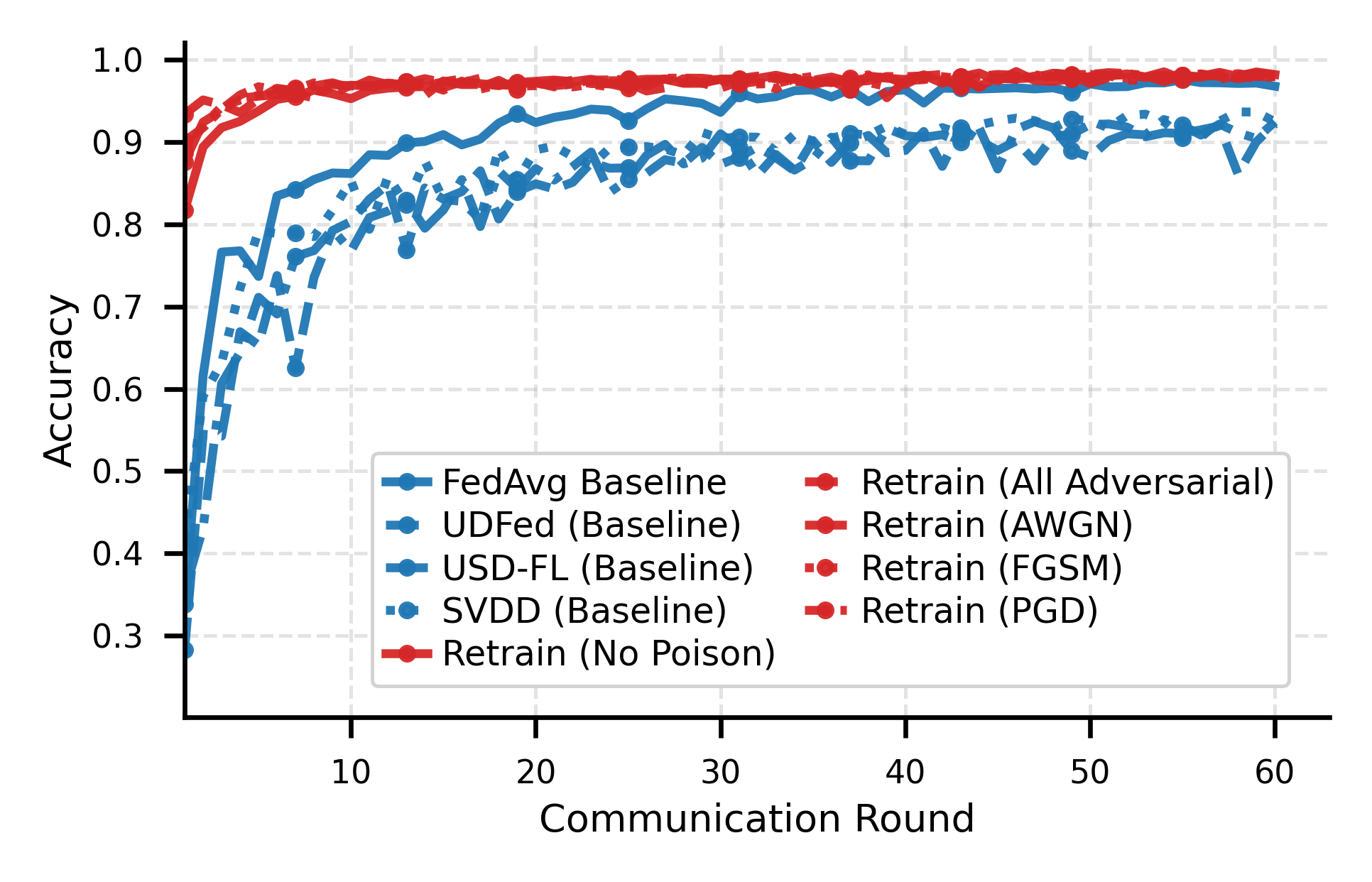}%
    \label{fig:non_iid_no_poison}}
  \hfill
  \subfloat[\textnormal{\textit{PGD non-IID}}]{%
    \includegraphics[width=0.49\textwidth]{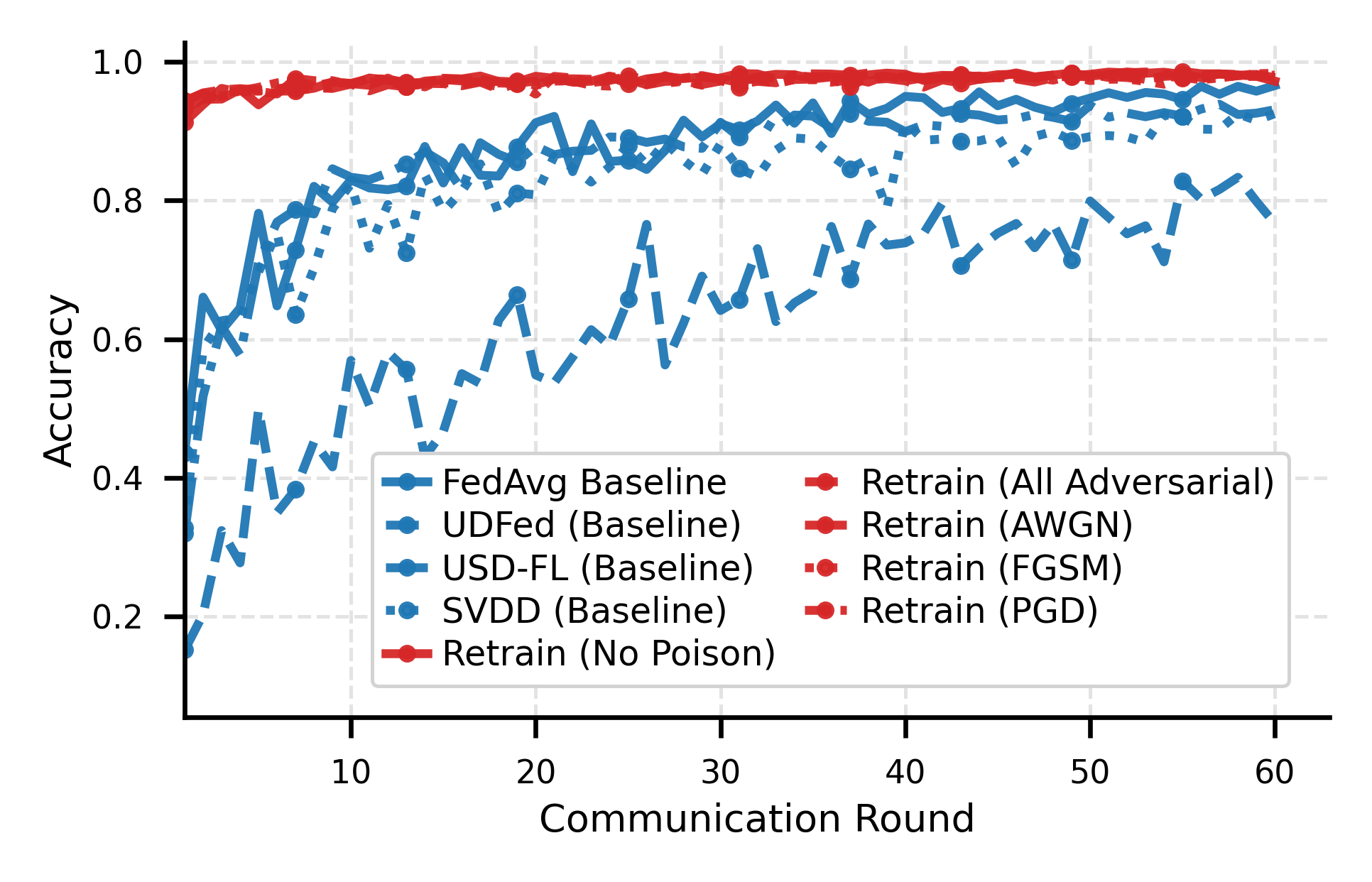}%
    \label{fig:non_iid_pgd}}

  \vspace{0.6em}

  \subfloat[\textnormal{\textit{AWGN non-IID}}]{%
    \includegraphics[width=0.49\textwidth]{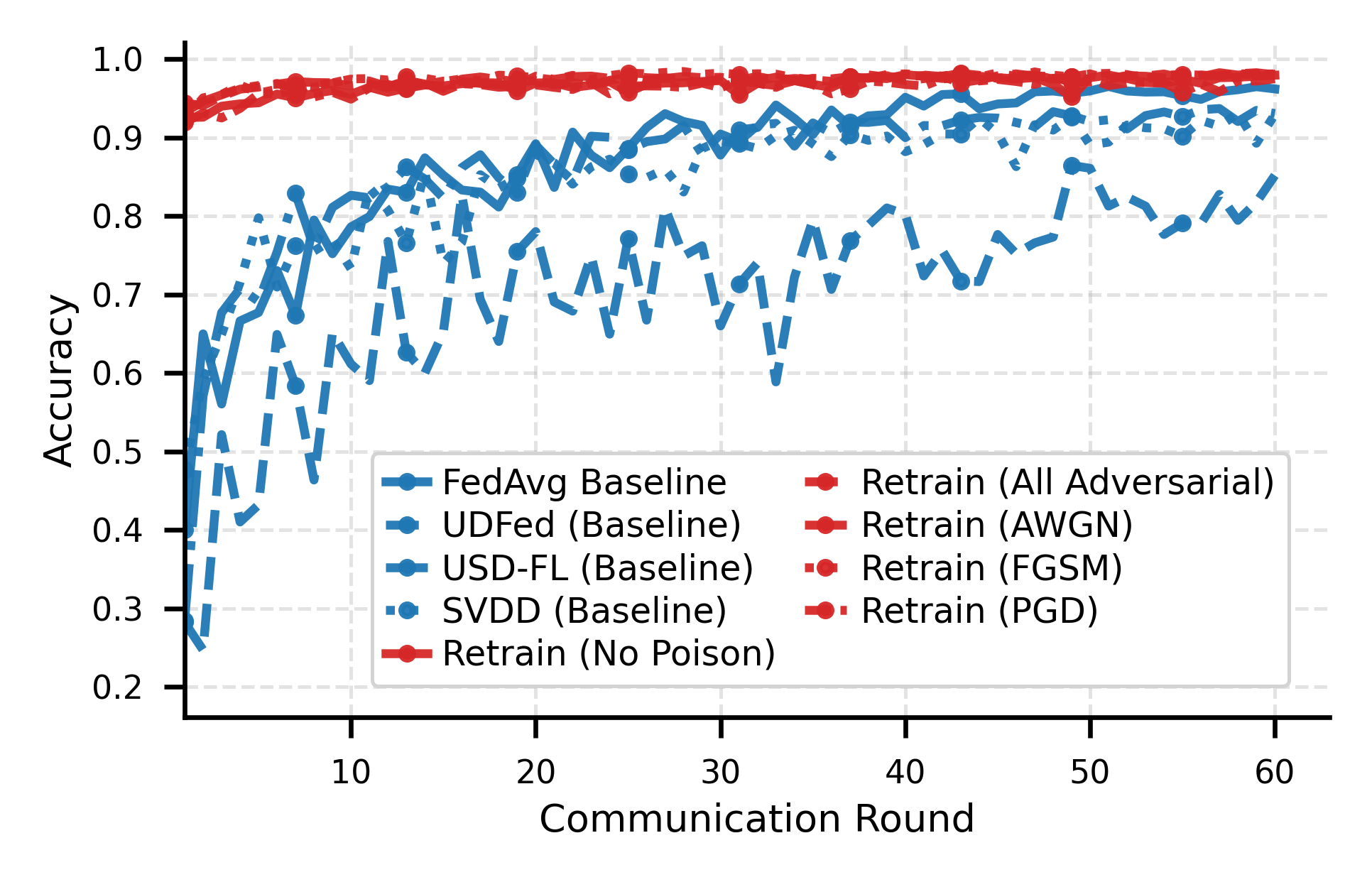}%
    \label{fig:non_iid_awgn}}
  \hfill
  \subfloat[\textnormal{\textit{FGSM non-IID}}]{%
    \includegraphics[width=0.49\textwidth]{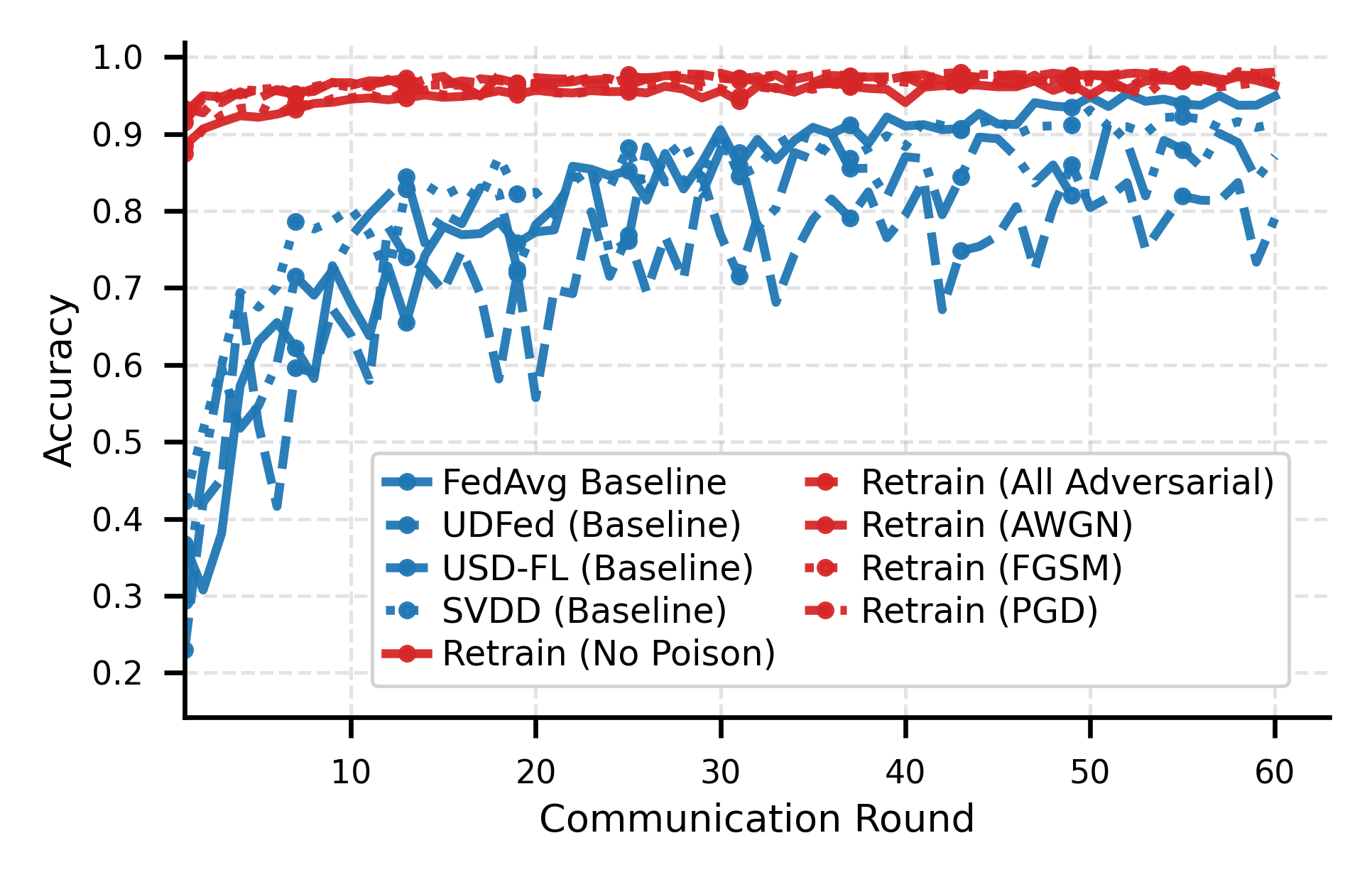}%
    \label{fig:non_iid_fgsm}}
  \caption{Global accuracy over communication rounds on AudioMNIST dataset under (a) clean, (b) PGD, (c) AWGN, and (d) FGSM poisoning attacks with non-IID data partition, comparing baselines with REVERB-FL framework methods (Retrain).}
  \label{fig:amnist_non_iid_2x2}
\end{figure*}

\textit{Datasets, partitioning, and preprocessing.}
We use the \emph{AudioMNIST} \cite{becker2023audiomnistexploringexplainableartificial} and \emph{UrbanSound8K} \cite{urbansound} datasets. Audio is resampled to 16\,kHz and transformed into complex STFT spectrograms using Hann windows with window length $L_w = 1024$ samples, hop size $h = 512$ samples, and FFT size $F = 1024$, yielding frequency bins $n_f = F/2 + 1 = 513$ and time frames $T$ dependent on utterance length. The complex spectrogram is split into real and imaginary components to form the input tensor $\mathbf{X} \in \mathbb{R}^{513 \times T \times 2}$ \cite{stft}. All input tensors are normalized per utterance to zero mean and unit variance, then clipped element-wise to the admissible set $\mathcal{X} = [-3, 3]^{n_f \times T \times 2}$ to ensure bounded input values. The 5\% reserve set is stratified by class and disjoint from client data. IID partitions use equal random splits; non-IID partitions use Dirichlet label-skew with concentration $\alpha = 0.5$. In this setting, for each class, sample indices are first grouped by label and then divided among clients according to proportions drawn from $\text{Dir}(\alpha)$. These proportions determine how many samples of each class each client receives, producing overlapping but imbalanced label distributions across clients. With $\alpha=0.5$, some clients concentrate on a few dominant classes while others retain more mixed distributions, yielding moderate heterogeneity representative of real-world non-IID audio data. 

\textit{Model architecture and training.}
The classifier is a spectrogram CNN \cite{spectrograms} with three convolutional blocks (32, 64, 128 filters with $3{\times}3$ kernels, batch normalization, ReLU activation, and $2{\times}2$ max-pooling), followed by a 128-unit dense layer with dropout (0.5) and softmax ($\sim\!1.1$M parameters). Our complete CNN architecture is shown in Table \ref{tab:cnn_architecture}. Clients use Adam optimizer with exponential learning rate decay (initial learning rate $\eta = 1 \times 10^{-4}$, decay rate 0.9, decay steps 1000), batch size $B = 16$, $L_2$ weight decay coefficient $\lambda = 1 \times 10^{-4}$, and dropout probability $p = 0.5$ applied to the penultimate layer. Reserve retraining at the server uses identical Adam settings but with larger batch size $B_r = 32$ to stabilize updates. These hyperparameters were selected via grid search on AudioMNIST validation data.

\begin{table}[t]
\centering
\caption{CNN Architecture for Audio Classification}
\label{tab:cnn_architecture}
\begin{tabular}{lcc}
\toprule
\textbf{Layer} & \textbf{Activation} & \textbf{Output Shape} \\
\midrule
Input & -- & $(n_f \times T \times 2)$ \\
\midrule
Conv2D (32 filters, $3\times3$) & ReLU & $(n_f \times T \times 32)$ \\
MaxPool2D ($2\times2$) & -- & $(n_f/2 \times T/2 \times 32)$ \\
\midrule
Conv2D (64 filters, $3\times3$) & ReLU & $(n_f/2 \times T/2 \times 64)$ \\
MaxPool2D ($2\times2$) & -- & $(n_f/4 \times T/4 \times 64)$ \\
\midrule
Conv2D (128 filters, $3\times3$) & ReLU & $(n_f/4 \times T/4 \times 128)$ \\
MaxPool2D ($2\times2$) & -- & $(n_f/8 \times T/8 \times 128)$ \\
\midrule
Flatten & -- & $(n_f \cdot T \cdot 128 / 64)$ \\
Dense (128 units) & ReLU & $128$ \\
Dropout ($p=0.5$) & -- & $128$ \\
Dense ($K$ classes) & Softmax & $K$ \\
\midrule
\multicolumn{3}{l}{\small Total parameters: $\sim$1.1M} \\
\end{tabular}
\end{table}

\textit{Federated learning protocol.}
Before federated training, the global model is pretrained on the reserve set $\mathcal{D}_r$ for 3 epochs. We adopt the FedAvg aggregation rule \cite{fedAvg_baseline} with sampling fraction $C=0.6$. For AudioMNIST, we use $N=10$ clients performing $\tau=10$ local SGD steps per round with batch size $B=16$, trained for $R=60$ communication rounds. For UrbanSound8K, we use $N=8$ clients performing $\tau=30$ local SGD steps per round with batch size $B=16$, trained for $R=100$ communication rounds. After each aggregation, the server performs reserve-set retraining for one epoch through $\mathcal{D}_r$ (approximately $r = \lceil|\mathcal{D}_r|/B_r\rceil$ SGD steps) using batch size $B_r=32$.

All REVERB-FL configurations use the FedAvg aggregation rule (Eq.~\eqref{eq:fedavg}) and differ only in their server-side reserve-set retraining strategy. We evaluate the following configurations:
\begin{enumerate}
    \item \textbf{Retrain (No Poison)} (Reserve Set retraining): FedAvg augmented with server-side retraining on a clean 5\% reserve set before training and after each aggregation round, providing stabilization without adversarial augmentation.
    \item \textbf{Retrain (FGSM)}, \textbf{Retrain (PGD)}, \textbf{Retrain (AWGN)}: Reserve set retraining with single-attack adversarial augmentation, where the reserve set is augmented with adversarial examples generated using FGSM \cite{fgsm}, PGD \cite{pgd}, or AWGN, respectively, to provide attack-specific robustness.
    \item \textbf{Retrain (All Adversarial)}: Reserve set retraining with mixed adversarial augmentation, where the reserve set is augmented with adversarial examples from all three attack types (FGSM, PGD, AWGN) to provide robustness across multiple poisoning strategies.
\end{enumerate}

These configurations are compared against baseline FedAvg \cite{fedAvg_baseline} and state-of-the-art FL defense methods introduced in Sec.~\ref{iid_res}. We report clean accuracy and robust accuracy measured on clean test inputs \emph{after training under poisoning}, with per-attack results and their mean.

\begin{figure*}[!t]
  \centering
  \subfloat[\textnormal{\textit{Clean non-IID}}]{%
    \includegraphics[width=0.49\textwidth]{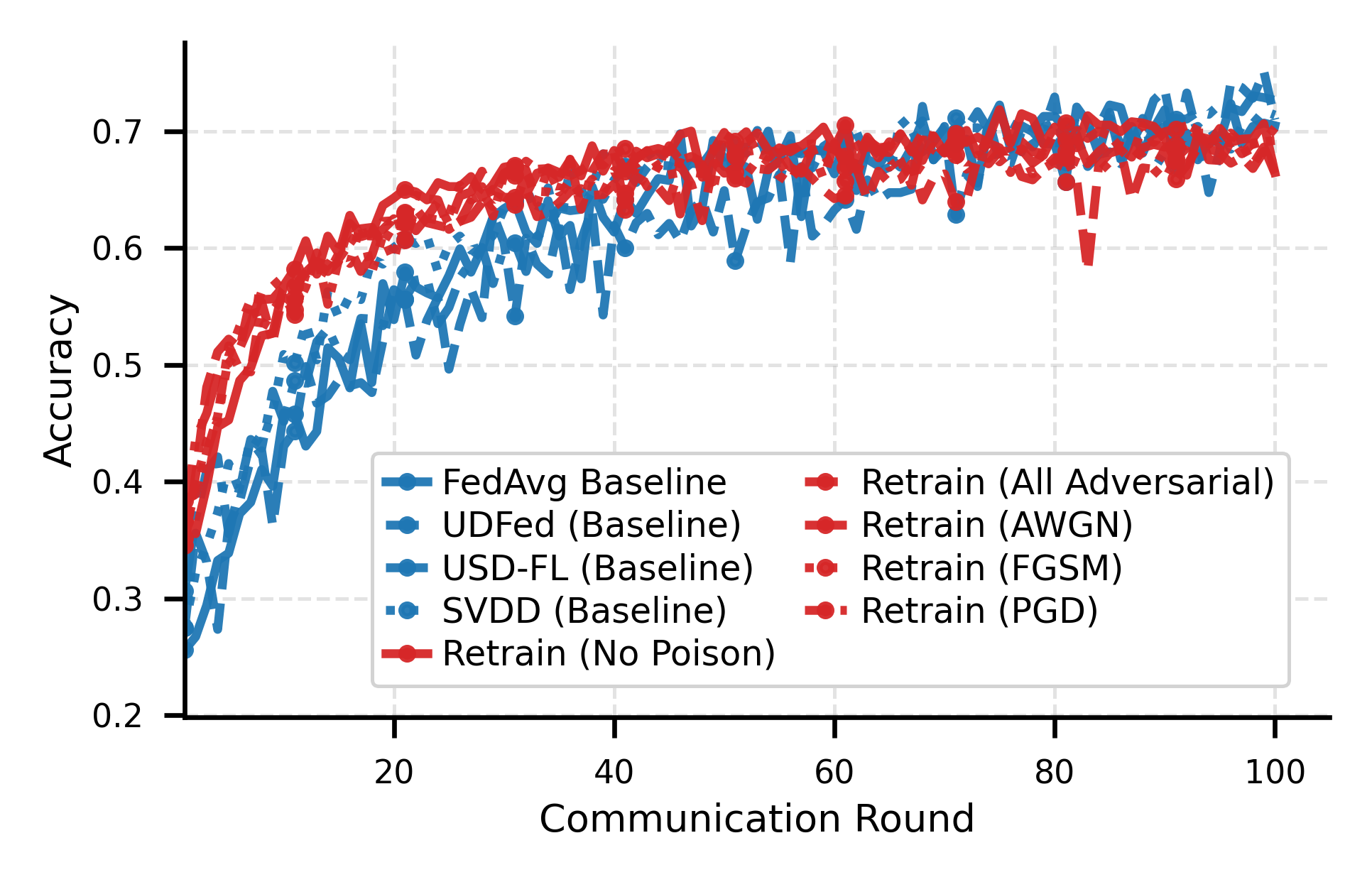}%
    \label{fig:us_non_iid_no_poison}}
  \hfill
  \subfloat[\textnormal{\textit{PGD non-IID}}]{%
    \includegraphics[width=0.49\textwidth]{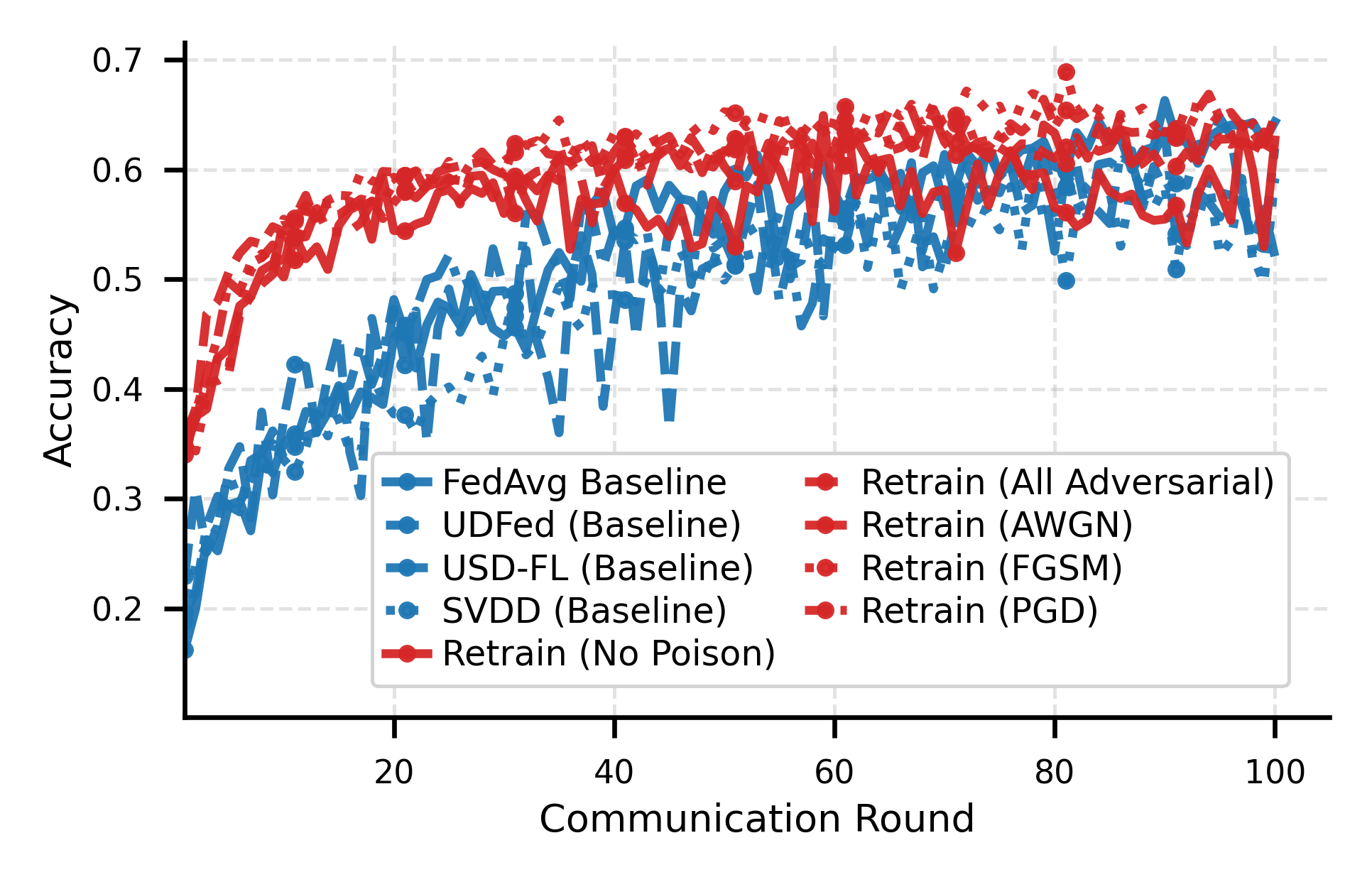}%
    \label{fig:us_non_iid_pgd}}

  \vspace{0.6em}

  \subfloat[\textnormal{\textit{AWGN non-IID}}]{%
    \includegraphics[width=0.49\textwidth]{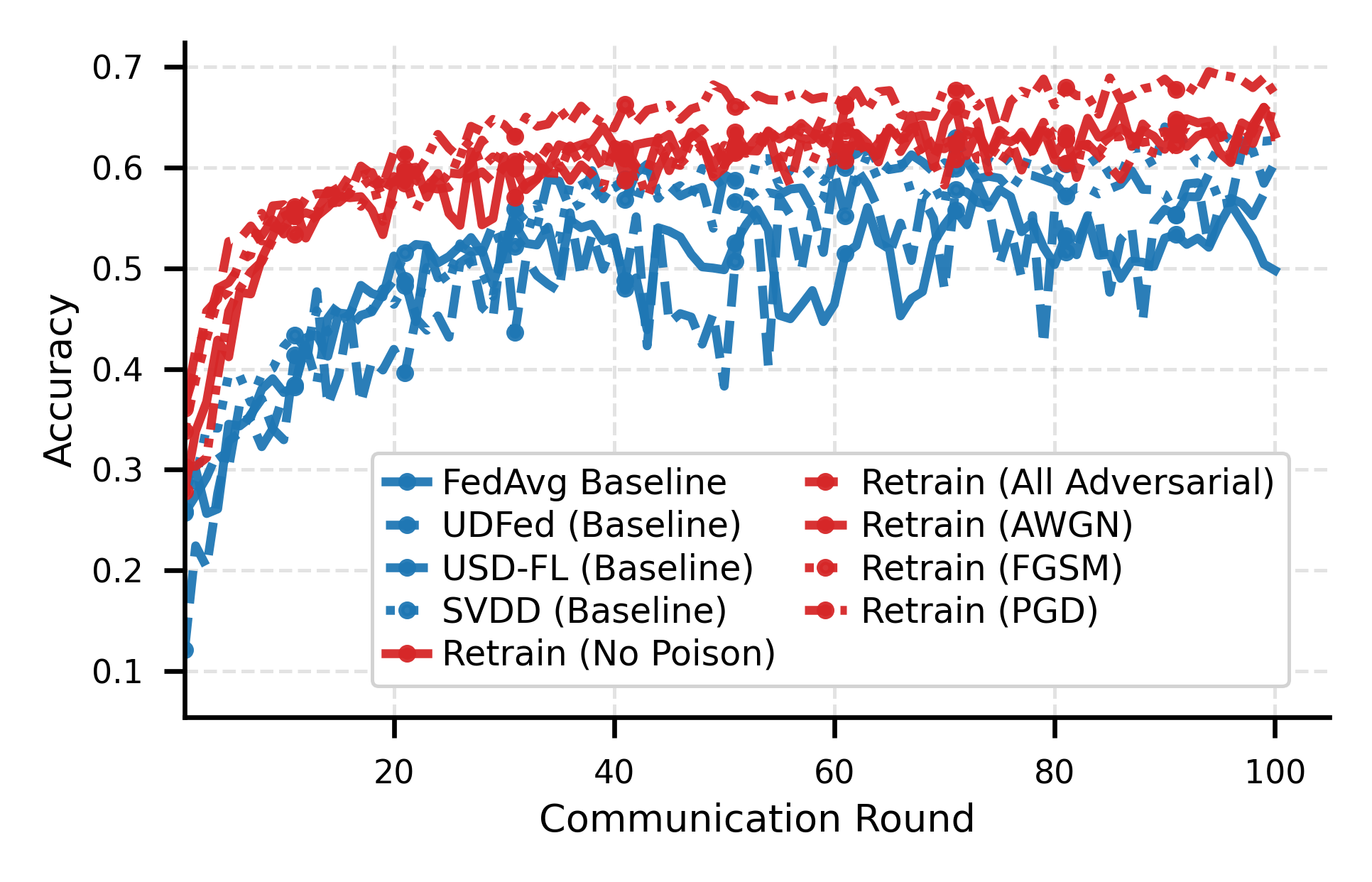}%
    \label{fig:us_non_iid_awgn}}
  \hfill
  \subfloat[\textnormal{\textit{FGSM non-IID}}]{%
    \includegraphics[width=0.49\textwidth]{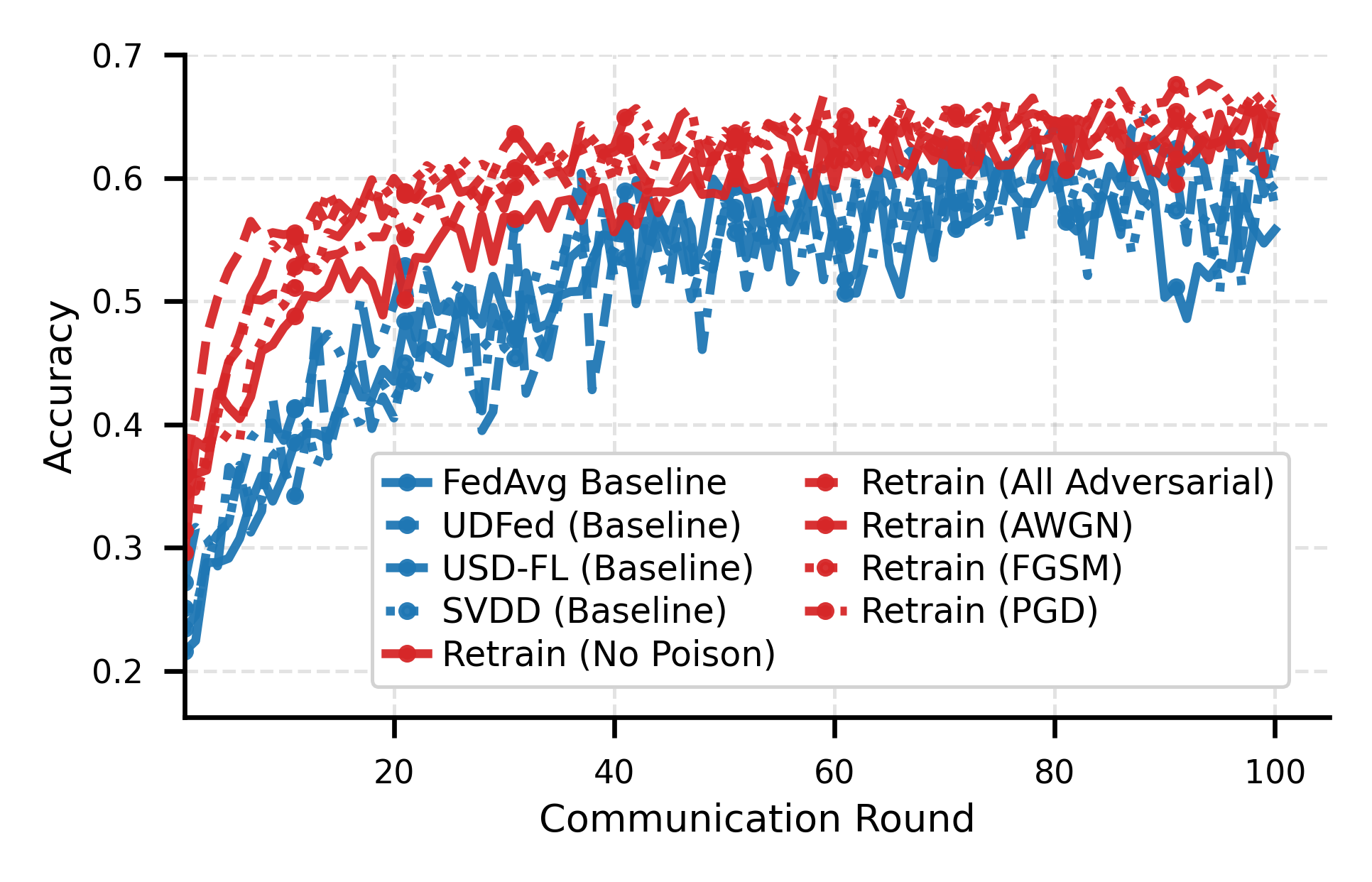}%
    \label{fig:us_non_iid_fgsm}}
  \caption{Global accuracy over communication rounds on UrbanSound8K dataset under (a) clean, (b) PGD, (c) AWGN, and (d) FGSM poisoning attacks with non-IID data partition, comparing baselines with REVERB-FL framework methods (Retrain).}
  \label{fig:us8k_non_iid_2x2}
\end{figure*}

\subsection{i.i.d. results}\label{iid_res}
\textit{Baseline comparison.} We compare REVERB-FL with baseline FedAvg \cite{fedAvg_baseline}, which performs standard weighted aggregation without reserve-set retraining, serving as the primary undefended reference. Additionally, we compare against three state-of-the-art FL defense methods selected to represent the major categories of existing defenses: (1) \textbf{USD-FL} \cite{evasion_attacks_fl}, which detects adversarial clients by analyzing logit distributions and computing pairwise 1-Wasserstein distances between client updates, using an adaptive threshold function without requiring knowledge of the number of adversaries. USD-FL is directly relevant as it was designed for signal classification in FL settings and represents the strongest existing client-detection defense for this domain. (2) \textbf{Deep SVDD} \cite{deep_svdd_cite}, which trains a deep one-class classifier using Deep Support Vector Data Description on benign model parameters from a root dataset, learning to detect anomalies by mapping parameters into a hypersphere of minimum volume while employing noise injection to prevent hypersphere collapse. Deep SVDD represents the class of anomaly-detection-based defenses \cite{krum, bulyan} that operate on model parameters rather than data, providing a strong representative of parameter-space filtering approaches. (3) \textbf{UDFed} \cite{udfed_cite}, which combines three defense strategies: anonymous obfuscation with differential privacy, joint similarity-based collusion detection using Kernel Density Estimation (T-KDE), and iterative low-rank approximation-based anomaly detection to amplify differences between benign and malicious gradients. UDFed represents a recent comprehensive multi-strategy defense combining differential privacy, collusion detection, and low-rank approximation \cite{udfed_cite}, reflecting the current state of the art in general FL defense. All baselines were configured using hyperparameters reported in their respective papers and evaluated under the same FL protocol, ensuring a fair comparison. Note that none of the baselines are specifically designed for the audio domain, which is intentional because it allows us to compare against the strongest general FL defenses to demonstrate that robust domain-agnostic approaches are insufficient and that domain-aware server-side retraining provides additional gains. Moreover, to the best of our knowledge, audio-specific defenses for FL have not been previously investigated.

On \textbf{AudioMNIST}, all methods perform comparably in the clean setting (Fig.~\ref{fig:iid_no_poison}), with reserve methods converging slightly faster to 97--98\%. Under gradient-based poisoning, significant gaps emerge: baselines degrade substantially under FGSM and PGD poisoning (Figs.~\ref{fig:iid_fgsm},~\ref{fig:iid_pgd}), while all reserve methods maintain above 96\%, with attack-matched configurations performing strongest. Table~\ref{tab:summary_results} summarizes the final-round accuracy across all experimental conditions, with the best-performing method per column highlighted in bold; subsequent references direct the reader to this table for detailed numerical comparisons. AWGN poisoning is less potent, with all methods remaining above 93\% (Fig.~\ref{fig:iid_awgn}). 

On \textbf{UrbanSound8K}, the 10-class task is more challenging overall (Figs.~\ref{fig:us_iid_no_poison}--\ref{fig:us_iid_fgsm}). Baselines degrade to 52--58\% under gradient-based attacks, with UDFed showing severe instability under AWGN poisoning. Reserve methods consistently sustain 60--68\%, with \emph{Retrain (PGD)} achieving the strongest results across most conditions (Table~\ref{tab:summary_results}).

Overall, IID experiments demonstrate that reserve-set retraining provides 5--15\% accuracy improvements over baselines under gradient-based attacks. Attack-specific retraining maximizes robustness against matched attacks, while \textbf{Retrain (All Adversarial)} provides consistent cross-attack robustness across all conditions.

\begin{figure*}[!t]
  \centering
  \subfloat[\textnormal{\textit{AudioMNIST IID}}]{%
    \includegraphics[width=0.49\textwidth]{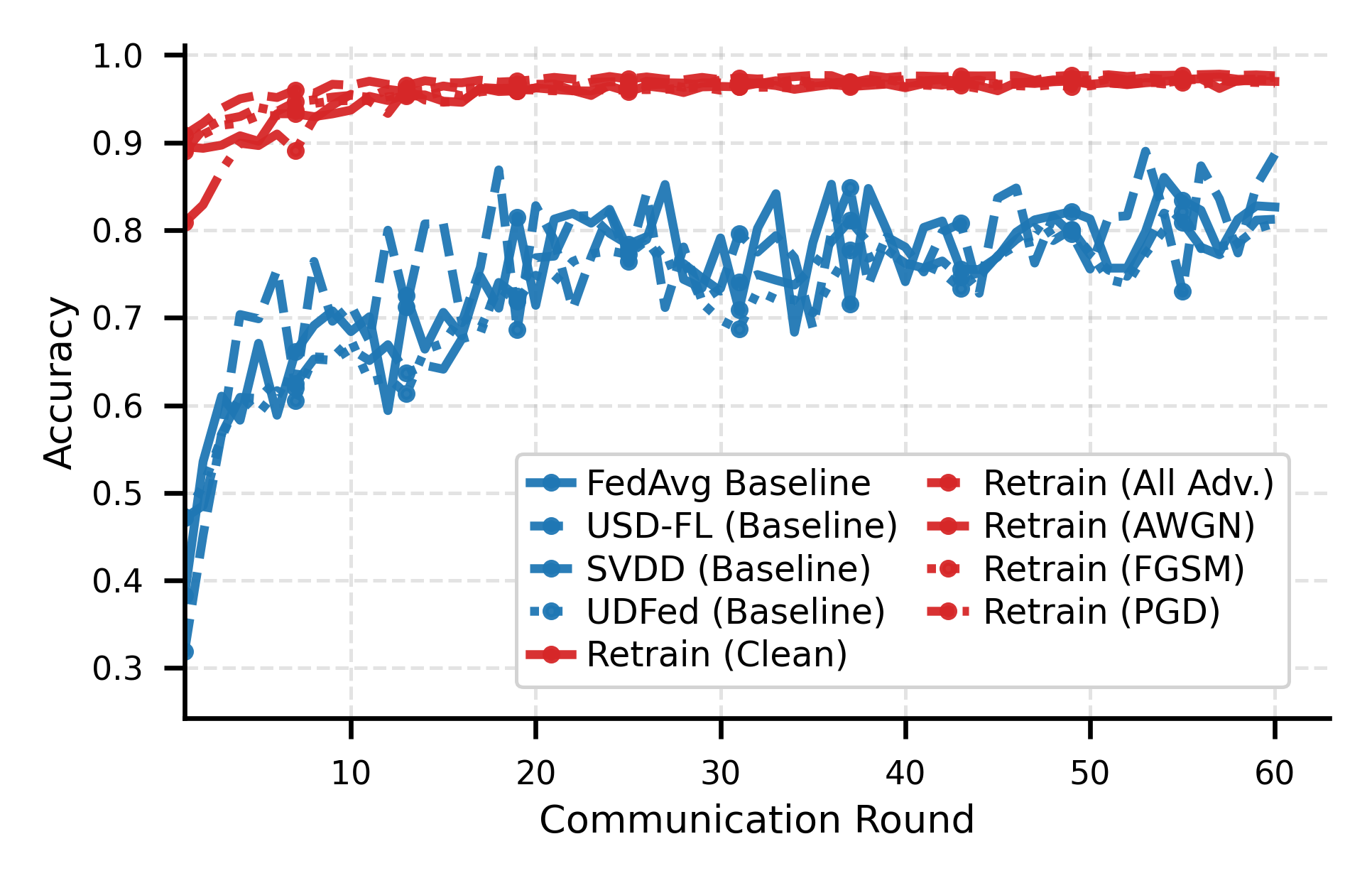}%
    \label{fig:amnist_iid_mixed}}
  \hfill
  \subfloat[\textnormal{\textit{AudioMNIST non-IID}}]{%
    \includegraphics[width=0.49\textwidth]{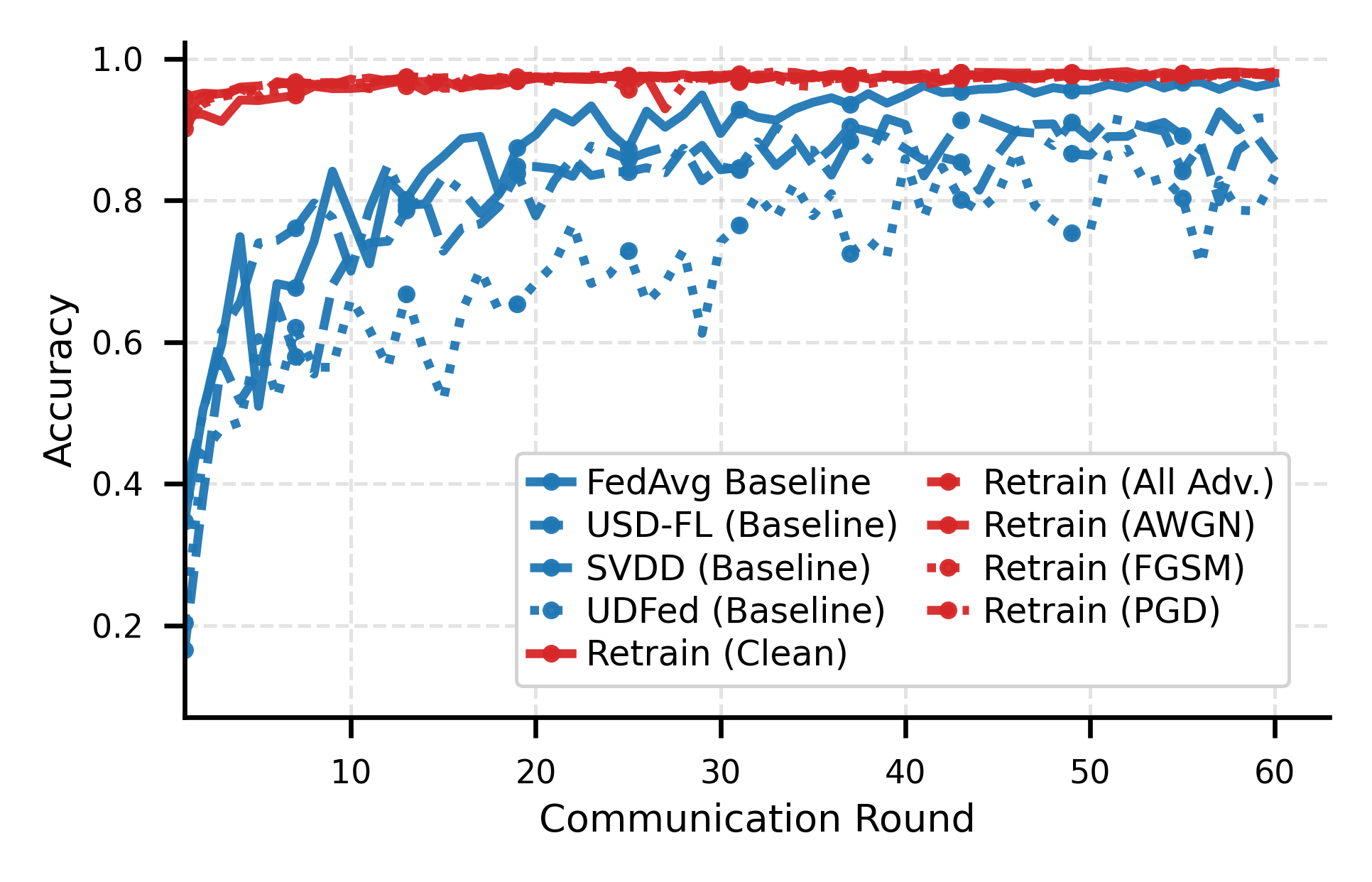}%
    \label{fig:amnist_non_iid_mixed}}

  \vspace{0.6em}

  \subfloat[\textnormal{\textit{UrbanSound8K IID}}]{%
    \includegraphics[width=0.49\textwidth]{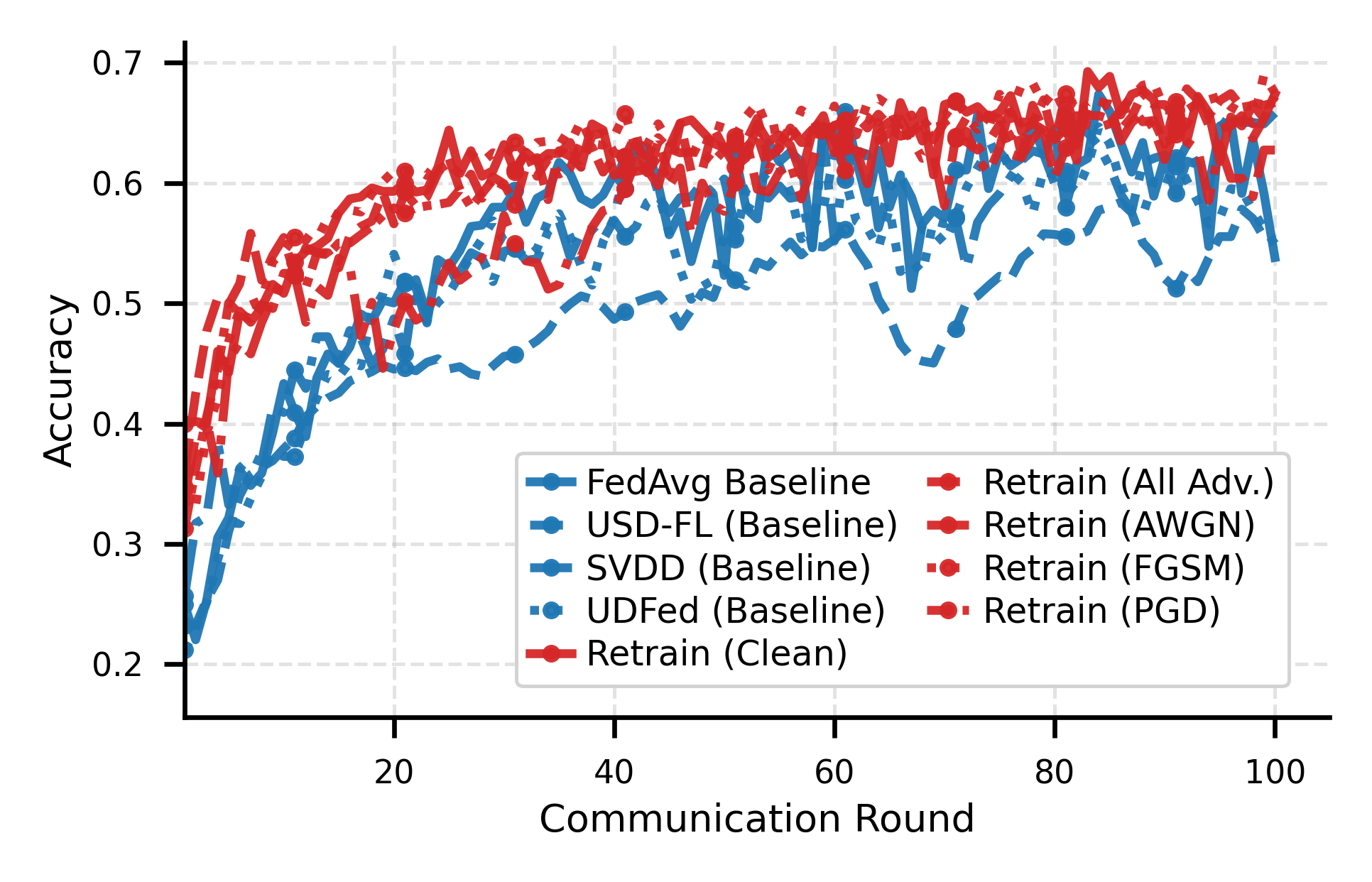}%
    \label{fig:us8k_iid_mixed}}
  \hfill
  \subfloat[\textnormal{\textit{UrbanSound8K non-IID}}]{%
    \includegraphics[width=0.49\textwidth]{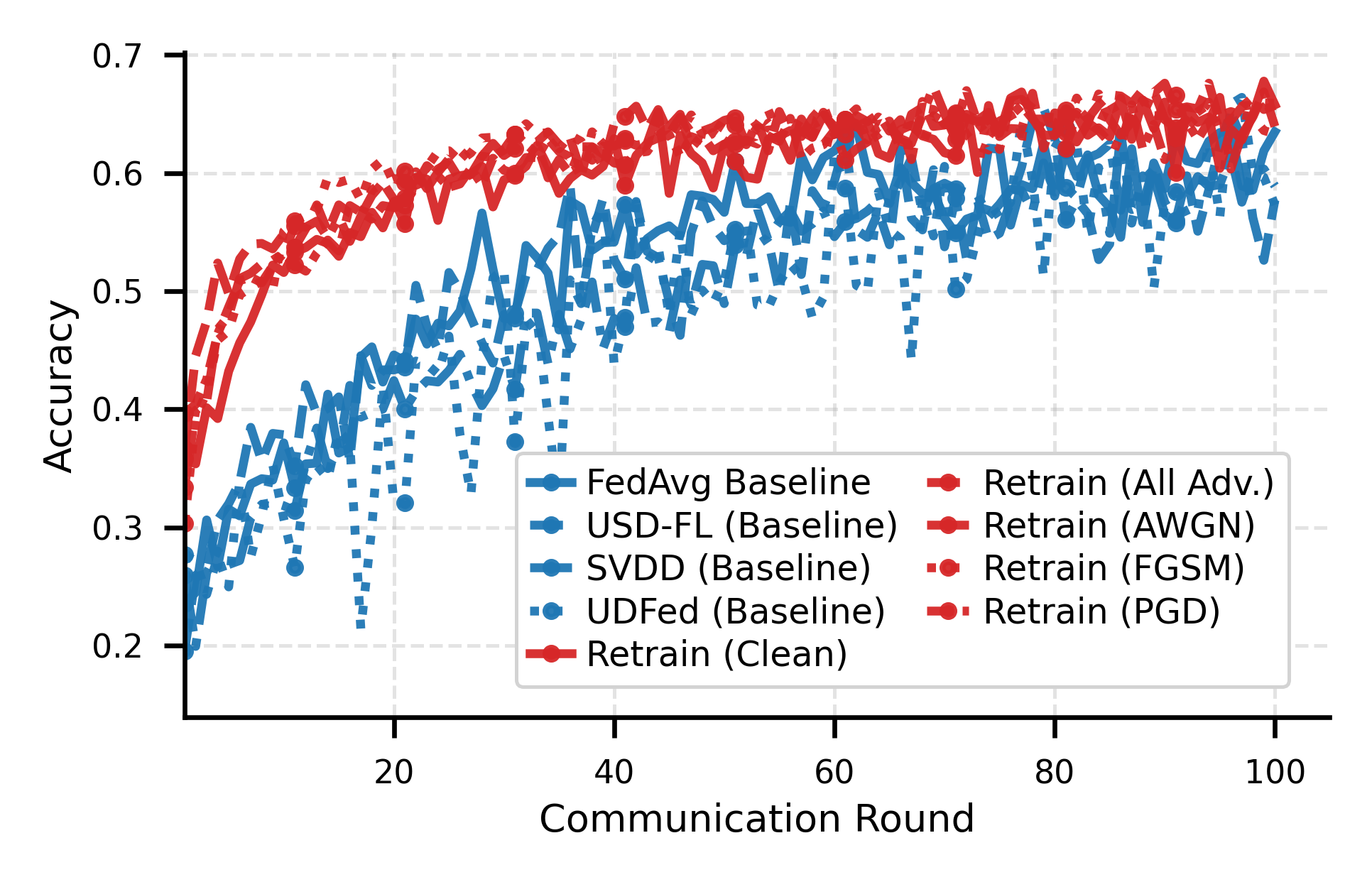}%
    \label{fig:us8k_non_iid_mixed}}
  \caption{Global accuracy over communication rounds under mixed poisoning attacks, where each adversarial client independently and randomly applies FGSM, PGD, or AWGN per round, for (a) AudioMNIST IID, (b) AudioMNIST non-IID, (c) UrbanSound8K IID, and (d) UrbanSound8K non-IID.}
  \label{fig:mixed_results}
\end{figure*}

\begin{table*}[t]
\centering
\caption{10-fold cross-validation results on AudioMNIST IID and non-IID under FGSM poisoning. Mean accuracy over last 5 rounds per fold, reported as mean $\pm$ std with 95\% confidence interval. Statistical significance reported vs. FedAvg (two-sided paired $t$-test).}
\label{tab:kfold_results}
\resizebox{\textwidth}{!}{%
\begin{tabular}{lcccccccccc}
\toprule
& \multicolumn{5}{c}{\textbf{AudioMNIST IID}} & \multicolumn{5}{c}{\textbf{AudioMNIST non-IID}} \\
\cmidrule(lr){2-6} \cmidrule(lr){7-11}
\textbf{Method} & \textbf{Mean} & \textbf{Std} & \textbf{95\% CI} & \textbf{$p$-value} & \textbf{Sig?} & \textbf{Mean} & \textbf{Std} & \textbf{95\% CI} & \textbf{$p$-value} & \textbf{Sig?} \\
\midrule
FedAvg & 82.09\% & 2.43\% & [80.58, 83.60] & -- & -- & 79.88\% & 3.54\% & [77.69, 82.07] & -- & -- \\
USD-FL & 82.72\% & 2.07\% & [81.44, 84.00] & 0.024 & Yes & 83.26\% & 3.41\% & [81.15, 85.37] & 0.018 & Yes \\
UDFed & 83.72\% & 2.28\% & [82.31, 85.13] & 0.193 & No & 79.14\% & 4.02\% & [76.65, 81.63] & 0.393 & No \\
Deep SVDD & 93.08\% & 0.68\% & [92.66, 93.50] & $<$0.0001 & Yes & 89.50\% & 1.48\% & [88.58, 90.42] & $<$0.0001 & Yes \\
\midrule
\multicolumn{11}{l}{\textit{REVERB-FL configurations (ours)}} \\
Retrain (No Poison) & 96.29\% & 0.39\% & [96.05, 96.53] & $<$0.0001 & Yes & 96.30\% & 0.38\% & [96.06, 96.54] & $<$0.0001 & Yes \\
Retrain (FGSM) & 96.54\% & 0.52\% & [96.22, 96.86] & $<$0.0001 & Yes & 96.42\% & 0.52\% & [96.10, 96.74] & $<$0.0001 & Yes \\
Retrain (PGD) & 96.35\% & 0.52\% & [96.03, 96.67] & $<$0.0001 & Yes & 96.32\% & 0.50\% & [96.01, 96.63] & $<$0.0001 & Yes \\
Retrain (AWGN) & 96.55\% & 0.55\% & [96.21, 96.89] & $<$0.0001 & Yes & 96.61\% & 0.53\% & [96.28, 96.94] & $<$0.0001 & Yes \\
Retrain (All Adv.) & 96.40\% & 0.52\% & [96.08, 96.72] & $<$0.0001 & Yes & 96.33\% & 0.50\% & [96.02, 96.64] & $<$0.0001 & Yes \\
\bottomrule
\end{tabular}}
\end{table*}

\subsection{Non-i.i.d. results}\label{non-res}
Client heterogeneity amplifies poisoning effects (Figs.~\ref{fig:amnist_non_iid_2x2}, \ref{fig:us8k_non_iid_2x2}). We evaluate the same baselines under non-IID label skew (Dirichlet $\alpha=0.5$), where each client observes highly imbalanced class distributions.

On \textbf{AudioMNIST}, heterogeneity amplifies baseline instability (Figs.~\ref{fig:amnist_non_iid_2x2}). UDFed degrades severely under all poisoning conditions, while FedAvg and USD-FL show high variance. In contrast, all REVERB-FL configurations converge stably to 96--98\% across all attack conditions, with a variant of our method achieving the best performance in every column (Table~\ref{tab:summary_results}). 

On \textbf{UrbanSound8K}, non-IID conditions widen the gap between reserve methods and baselines relative to IID (Figs.~\ref{fig:us8k_non_iid_2x2}). Baselines degrade to 51--65\% under gradient-based attacks, while REVERB-FL configurations sustain 59--68\%, representing a 10--15\% improvement under PGD poisoning. Across all UrbanSound8K non-IID conditions, a variant of our method consistently achieves the highest accuracy (Table~\ref{tab:summary_results}).

Non-IID conditions amplify both convergence instability and poisoning vulnerability, with performance gaps being most pronounced under PGD poisoning. Attack-specific retraining maximizes defense against known attacks, while \textbf{Retrain (All Adversarial)} provides robust cross-attack performance without requiring knowledge of the attack type, making it suitable for practical deployment where attack types may vary or be unknown.

In addition, these results provide partial evidence against adaptive adversaries: since retraining configurations retain strong performance even under mismatched attack types (e.g., Retrain (FGSM) under PGD poisoning, Figs.~\ref{fig:amnist_iid_2x2},~\ref{fig:amnist_non_iid_2x2}), the defense is not easily circumvented by varying the attack strategy. Future work may further investigate a full evaluation of adversaries explicitly targeting the reserve-set mechanism with additional adaptive client poisoning strategies.

\begin{table*}[t]
\centering
\caption{Summary of final-round accuracy (\%) across all experimental conditions, 
reported as mean accuracy over the last 5 communication rounds. 
\textbf{Bold} indicates the best result per column.}
\label{tab:summary_results}
\resizebox{\textwidth}{!}{%
\begin{tabular}{lcccccccccc}
\toprule
& \multicolumn{5}{c}{\textbf{AudioMNIST IID}} & \multicolumn{5}{c}{\textbf{AudioMNIST non-IID}} \\
\cmidrule(lr){2-6} \cmidrule(lr){7-11}
\textbf{Method} & Clean & FGSM & PGD & AWGN & Mixed & Clean & FGSM & PGD & AWGN & Mixed \\
\midrule
FedAvg                & 94.4 & 84.9 & 85.1 & 94.5 & 12.7 & 97.1 & 94.2 & 96.1 & 95.9 & 96.3 \\
UDFed                 & 94.1 & 81.0 & 76.4 & 84.4 & 96.9 & 90.4 & 79.8 & 80.3 & 81.6 & 78.9 \\
USD-FL                & 94.2 & 94.6 & 94.4 & 94.4 & 84.4 & 91.4 & 87.2 & 93.0 & 93.2 & 85.9 \\
Deep SVDD             & 94.4 & 92.5 & 93.5 & 93.9 & 93.2 & 92.6 & 91.3 & 91.4 & 91.8 & 90.6 \\
\midrule
\multicolumn{11}{l}{\textbf{REVERB-FL configurations}} \\
Retrain (No Poison)   & \textbf{97.4} & 96.8 & 96.8 & 97.1 & 96.9 & \textbf{98.3} & 96.9 & 97.8 & 97.6 & \textbf{97.9} \\
Retrain (FGSM)        & \textbf{97.4} & 96.8 & 96.6 & 96.7 & 96.9 & 97.9 & 96.6 & 97.9 & \textbf{98.0} & 97.8 \\
Retrain (PGD)         & 97.2 & 96.8 & 96.8 & \textbf{97.8} & 97.1 & 98.1 & \textbf{97.5} & 97.9 & 96.9 & 97.8 \\
Retrain (AWGN)        & 97.0 & 96.5 & 97.1 & 97.2 & 97.4 & 98.2 & 97.2 & 97.9 & \textbf{98.0} & 97.8 \\
Retrain (All Adv.)    & 97.1 & \textbf{96.9} & \textbf{97.3} & 96.8 & \textbf{97.7} & 98.0 & 97.3 & \textbf{98.1} & \textbf{98.0} & 97.7 \\
\midrule
& \multicolumn{5}{c}{\textbf{UrbanSound8K IID}} & \multicolumn{5}{c}{\textbf{UrbanSound8K non-IID}} \\
\cmidrule(lr){2-6} \cmidrule(lr){7-11}
\textbf{Method} & Clean & FGSM & PGD & AWGN & Mixed & Clean & FGSM & PGD & AWGN & Mixed \\
\midrule
FedAvg                & 70.8 & 55.2 & 53.6 & 63.4 & 58.4 & 67.2 & 52.4 & 58.3 & 60.1 & 61.3 \\
UDFed                 & 69.3 & 57.8 & 58.2 & 56.4 & 69.4 & 66.8 & 57.2 & 52.1 & 57.3 & 61.9 \\
USD-FL                & 70.1 & 58.6 & 52.4 & 64.2 & 22.0 & 68.9 & 55.1 & 54.2 & 58.3 & 58.9 \\
Deep SVDD             & 69.4 & 60.1 & 62.3 & 63.8 & 65.3 & 67.4 & 58.3 & 51.6 & 60.4 & 59.7 \\
\midrule
\multicolumn{11}{l}{\textbf{REVERB-FL configurations}} \\
Retrain (No Poison)   & 71.7 & 67.5 & 60.6 & 65.3 & 65.9 & \textbf{69.4} & 63.5 & 59.0 & 63.6 & 65.0 \\
Retrain (FGSM)        & 68.2 & 67.0 & 65.8 & 67.0 & 66.7 & \textbf{69.4} & \textbf{66.0} & 63.5 & 62.8 & \textbf{65.6} \\
Retrain (PGD)         & \textbf{71.8} & \textbf{68.6} & \textbf{68.0} & \textbf{67.6} & 66.7 & 69.0 & 65.7 & \textbf{63.9} & \textbf{68.4} & 65.1 \\
Retrain (AWGN)        & 70.1 & 61.2 & 63.3 & 62.8 & 61.0 & 68.0 & 64.7 & 62.2 & 63.9 & 63.0 \\
Retrain (All Adv.)    & 69.6 & 64.4 & 66.8 & 67.1 & \textbf{66.7} & 67.7 & 64.1 & 62.4 & 63.7 & 65.5 \\
\bottomrule
\end{tabular}}
\end{table*}

To further assess robustness under heterogeneous adversarial behavior, we evaluate a setting in which each adversarial client independently and randomly applies one of FGSM, PGD, or AWGN per round with no fixed strategy. Results across both datasets and partitions (Fig.~\ref{fig:mixed_results}) show that all REVERB-FL configurations maintain strong robustness, with baselines degrading significantly. On AudioMNIST non-IID, REVERB-FL methods converge stably to 95--97\% while baselines exhibit instability at 80--93\%. On UrbanSound8K, REVERB-FL sustains 63--68\% compared to 54--63\% for baselines. Notably, all REVERB-FL configurations perform comparably, demonstrating that the reserve retraining mechanism provides robustness regardless of the specific augmentation strategy used at the server.

Finally, to assess statistical stability, we conduct 10-fold cross-validation on AudioMNIST under FGSM poisoning for both IID and non-IID partitions, including all baselines and all REVERB-FL configurations. Results are summarized in Table~\ref{tab:kfold_results}. All REVERB-FL configurations achieve mean accuracy of 96.3--96.6\% with low standard deviation (0.39--0.55\%) under both IID and non-IID conditions, compared to FedAvg at 82.1\% $\pm$ 2.43\% (IID) and 79.9\% $\pm$ 3.54\% (non-IID). All REVERB-FL improvements are statistically significant ($p < 0.0001$, two-sided paired $t$-test vs.\ FedAvg), with no overlap between FedAvg and any REVERB-FL configuration across all 10 folds. Among baselines, Deep SVDD achieves the strongest performance at 93.1\% (IID) and 89.5\% (non-IID), but remains 3.2--3.5\% and 6.8--7.1\% below the weakest IID and non-IID REVERB-FL configuration, respectively. Critically, the 95\% confidence intervals of all REVERB-FL configurations do not overlap with those of any baseline in either setting, providing strong evidence that the improvements are statistically robust across data splits. Notably, UDFed (baseline) does not achieve statistical significance vs. FedAvg (baseline) in either setting ($p = 0.19$ IID, $p = 0.39$ non-IID), suggesting its improvements are not consistent across data splits. Full cross-validation results for other datasets were consistent with the trends shown in Table \ref{tab:kfold_results} and, thus, were omitted for brevity.

\section{Limitations} \label{sec:limitations}
Several limitations of the current framework warrant discussion. First, REVERB-FL operates under a closed-set assumption in which the global label space is fixed and shared across all clients and the server reserve set. Thus, our framework, along with standard federated audio classification settings, needs to be adapted to open-set label spaces prior to being applied in open-set scenarios where clients may encounter or inject unknown-class audio samples. For such scenarios, the literature warrants a dynamic framework capable of adapting to open-set and out of distribution label spaces. Second, an adversarial client could potentially inject out-of-distribution samples with in-distribution labels to poison the training process. Due to the effectiveness of our method in mitigating general poisoning attacks, reserve-set retraining on clean, representative data (as done at the conclusion of each training iteration in REVERB-FL) would partially counteract such corruption by mitigating drifts introduced by the poisoned triggers. However, it is important to note that the extent to which such types of attacks would be mitigated by REVERB-FL is beyond the scope of our work. Third, the reserve set is assumed to be clean and trusted. In practice, collecting a perfectly clean reserve set may be challenging, and performance may degrade if $\mathcal{D}_r$ is contaminated or severely imbalanced. Since $\mathcal{D}_r$ is small and collected once before training, lightweight quality validation prior to deployment can mitigate this risk; furthermore, the bounded impact of mild imbalance is captured theoretically by the mismatch term $\varepsilon_r$ in Proposition~2, which remains small under stratified sampling. Fourth, the fixed adversarial fraction assumption may not reflect real-world settings where adversarial participation varies across rounds.

\section{Conclusion} \label{conclusion_sec}
This work introduced \textbf{REVERB-FL}, a server-side defense framework for federated audio classification that integrates reserve-set retraining with adversarial augmentation. Our approach enhances robustness to poisoning and non-IID heterogeneity without modifying the client-side protocol or aggregation rule. Through experiments in a multitude of settings on \emph{AudioMNIST} and \emph{UrbanSound8K} under various model poisoning perturbations, REVERB-FL consistently improved convergence stability and maintained higher global accuracy compared to baseline FedAvg and existing poison defenses. Our theoretical analysis established a round-wise contraction bound, demonstrating accelerated convergence and reduced steady-state error in the presence of adversarial poisoning attacks.

By leveraging a small trusted subset at the server, REVERB-FL achieves robustness while preserving data privacy and scalability, making it directly compatible with standard FL frameworks. Future work will focus on (i) integrating dynamic reserve selection to adapt to varying attack intensities, (ii) exploring adaptive aggregation or certified robustness bounds, or (iii) testing with variable clients and client data. Moreover, future work will consider applications of REVERB-FL in domains beyond audio for secure signal-domain federated learning, such as RF sensing, biomedical signals, image processing, and industrial acoustics. Overall, REVERB-FL introduced a  privacy-preserving federated learning and adversarially resilient audio signal modeling framework, offering a theoretically-backed and practical direction for robust audio FL systems.
\bibliography{references}

\bibliographystyle{IEEEtran}

\appendices
\section{Proof of Theorem 1}
\label{appendix:proofs}

\noindent\textbf{Notation recap.}
The global objective is $\varphi(\theta)=\tfrac{1}{N}\sum_{n=1}^N \varphi_n(\theta)$ with optimal minimizer value $\varphi^\star=\min_\theta \varphi(\theta)$. In round $t$, we denote $\theta_t^{+} \equiv \theta^{(t+1,0)}$ as the 
post-FedAvg iterate before reserve retraining. Clients use $\tau$ local steps; the effective FedAvg stepsize is $\gamma_g$ (instantiated as $\eta\tau$ in our implementation). The reserve performs $r$ server-side steps of size $\gamma_r$. Assumptions in the convergence analysis hold: $L$-smoothness, $\mu$-strong convexity, bounded stochastic variance $\sigma_g^2$, bounded drift $\zeta^2$, and the fixed adversarial client set $A$ with fraction $\rho$. At round $t$, $S_t$ are the $m$ sampled clients, and $\beta_t=|S_t\cap A|/m$ satisfies $[\beta_t]=\rho$ and $[\beta_t^2]=\rho^2+\rho(1-\rho)\frac{N-m}{m(N-1)}$. Each adversarial client may shift its local gradient by at most $\Gamma$.

\subsection{Preliminaries}

\begin{lemma}[Descent lemma]\label{lem:descent}
If $\varphi$ is $L$-smooth, then for any $\theta$, direction $g$, and stepsize $\gamma\le 1/L$,
\begin{equation}
\varphi(\theta-\gamma g)
\;\le\;
\varphi(\theta) - \gamma\langle \nabla\varphi(\theta), g\rangle
+ \tfrac{L\gamma^2}{2}\,\|g\|^2.
\label{eq:descent}
\end{equation}
\end{lemma}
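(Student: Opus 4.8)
The plan is to derive the stated inequality as a direct specialization of the quadratic upper bound furnished by $L$-smoothness. First, I would note that although Assumption~\ref{ass:smooth} states $L$-smoothness for each local objective $\varphi_n$, the global objective $\varphi=\tfrac{1}{N}\sum_{n=1}^N\varphi_n$ inherits the same $L$-smoothness, since an average of $L$-smooth functions is $L$-smooth; equivalently, the lemma simply takes $\varphi$ to be $L$-smooth as its hypothesis. Hence for any pair of points $\theta,\theta'$ the quadratic upper bound holds:
\[
\varphi(\theta') \le \varphi(\theta) + \langle \nabla\varphi(\theta),\,\theta'-\theta\rangle + \tfrac{L}{2}\|\theta'-\theta\|^2.
\]

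Next, I would instantiate this bound at the gradient-step iterate $\theta'=\theta-\gamma g$, so that the displacement is $\theta'-\theta=-\gamma g$. Substituting into the inner-product term gives $\langle\nabla\varphi(\theta),-\gamma g\rangle=-\gamma\langle\nabla\varphi(\theta),g\rangle$, and into the quadratic term gives $\tfrac{L}{2}\|-\gamma g\|^2=\tfrac{L\gamma^2}{2}\|g\|^2$. Collecting these yields exactly
\[
\varphi(\theta-\gamma g)\le \varphi(\theta)-\gamma\langle\nabla\varphi(\theta),g\rangle+\tfrac{L\gamma^2}{2}\|g\|^2,
\]
which is the claimed bound in \eqref{eq:descent}.

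Finally, I would remark that the stepsize restriction $\gamma\le 1/L$ is not actually required for this bound—the quadratic upper bound holds for every $\gamma$—but it is retained in the statement because downstream (e.g., when $g=\nabla\varphi(\theta)$) the condition $\gamma\le 1/L$ ensures the coefficient $1-\tfrac{L\gamma}{2}\ge\tfrac{1}{2}$ and hence a genuine per-step decrease that drives the contraction argument of Theorem~\ref{thm:conv-main}. There is no substantive obstacle here: the computation is a one-line substitution, and the only point meriting care is confirming that $\varphi$ itself (not merely each $\varphi_n$) satisfies the smoothness inequality, which follows immediately from averaging the per-client smoothness bounds.
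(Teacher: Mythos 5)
Your proof is correct and is exactly the standard argument the paper implicitly relies on: the lemma is stated without proof as a textbook consequence of $L$-smoothness, and your one-line substitution $\theta'=\theta-\gamma g$ into the quadratic upper bound is the canonical derivation. Your side remarks are also accurate---the bound holds for all $\gamma$ (the restriction $\gamma\le 1/L$ only matters downstream in Lemma~\ref{lem:contract}), and the global $\varphi$ inherits $L$-smoothness from the per-client Assumption~\ref{ass:smooth} by averaging.
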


\begin{lemma}[PL inequality under strong convexity]\label{lem:pl}
If $\varphi$ is $\mu$-strongly convex, then for all $\theta$,
\begin{equation}
\|\nabla \varphi(\theta)\|^2 \;\ge\; 2\mu\,\big(\varphi(\theta)-\varphi^\star\big).
\label{eq:pl}
\end{equation}
\end{lemma}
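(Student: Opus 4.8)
The plan is to derive this Polyak--{\L}ojasiewicz-type inequality directly from the defining inequality of $\mu$-strong convexity (Assumption~\ref{ass:convex}) by taking an infimum of both sides over the free argument. Strong convexity states that for any fixed $\theta$, the function $\varphi$ is globally lower bounded by the quadratic $\theta' \mapsto \varphi(\theta) + \langle \nabla\varphi(\theta), \theta'-\theta\rangle + \tfrac{\mu}{2}\|\theta'-\theta\|^2$. Since $\varphi(\theta') \ge g(\theta')$ holds pointwise for every $\theta'$, where $g$ denotes this quadratic, minimizing each side over $\theta'$ preserves the inequality and gives $\varphi^\star = \min_{\theta'}\varphi(\theta') \ge \min_{\theta'} g(\theta')$.

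The key step is then to evaluate $\min_{\theta'} g(\theta')$ in closed form. I would set the gradient of $g$ to zero, obtaining the minimizer $\theta' = \theta - \tfrac{1}{\mu}\nabla\varphi(\theta)$, and substitute it back; the linear and quadratic contributions collapse to $-\tfrac{1}{2\mu}\|\nabla\varphi(\theta)\|^2$, so that $\min_{\theta'} g(\theta') = \varphi(\theta) - \tfrac{1}{2\mu}\|\nabla\varphi(\theta)\|^2$. Chaining this with the previous display yields $\varphi^\star \ge \varphi(\theta) - \tfrac{1}{2\mu}\|\nabla\varphi(\theta)\|^2$, which rearranges immediately into the claimed bound $\|\nabla\varphi(\theta)\|^2 \ge 2\mu(\varphi(\theta)-\varphi^\star)$.

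I do not expect a genuine obstacle here, since this is the textbook implication ``strong convexity $\Rightarrow$ PL with the same constant''; the only point meriting a line of justification is the passage from the pointwise bound $\varphi(\theta')\ge g(\theta')$ to $\varphi^\star\ge\min_{\theta'}g(\theta')$, which holds because $\varphi^\star=\varphi(\theta'_\star)\ge g(\theta'_\star)\ge\min_{\theta'}g(\theta')$ at any minimizer $\theta'_\star$ of $\varphi$. The reason this lemma is isolated is structural rather than technical: it is precisely the device that converts a guaranteed decrease in $\|\nabla\varphi\|^2$ into contraction of the optimality gap $\varphi(\theta)-\varphi^\star$, and it will be invoked in exactly that role in both the FedAvg-aggregation phase and each reserve-set SGD step when assembling the round-wise contraction of Theorem~\ref{thm:conv-main}.
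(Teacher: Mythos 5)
Your proof is correct: minimizing both sides of the strong-convexity inequality over the free argument $\theta'$, evaluating the quadratic minorant at $\theta' = \theta - \tfrac{1}{\mu}\nabla\varphi(\theta)$, and rearranging gives exactly \eqref{eq:pl}, and your justification of the passage from the pointwise bound to the bound on minima is sound. The paper itself states this lemma without any proof, treating it as the standard ``strong convexity implies PL'' fact, and your derivation is precisely the canonical argument the paper implicitly relies on (and it is then used, as you anticipate, in Lemma~\ref{lem:contract} and Proposition~\ref{prop:half}).
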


\begin{lemma}[Exact gradient step contracts]\label{lem:contract}
Under $L$-smoothness and $\mu$-strong convexity, any exact step of size $\gamma\in(0,1/L]$ obeys
\begin{equation}
\varphi\!\left(\theta-\gamma\nabla\varphi(\theta)\right)-\varphi^\star
\;\le\;
(1-\mu\gamma)\,\big(\varphi(\theta)-\varphi^\star\big).
\label{eq:exact-contract}
\end{equation}
\end{lemma}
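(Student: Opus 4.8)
The plan is to compose the two preliminary estimates already established in the appendix -- the descent inequality \eqref{eq:descent} and the Polyak--{\L}ojasiewicz bound \eqref{eq:pl} -- so that a single exact gradient step is shown to shrink the suboptimality $\varphi(\theta)-\varphi^\star$ by the multiplicative factor $(1-\mu\gamma)$. The whole argument is a short chain of substitutions; the only structural choice is the order in which the two bounds are applied.

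First I would instantiate the descent lemma (Lemma~\ref{lem:descent}) at the direction $g=\nabla\varphi(\theta)$, which is admissible because the exact step $\theta-\gamma\nabla\varphi(\theta)$ is precisely of the form $\theta-\gamma g$ and the hypothesis $\gamma\in(0,1/L]$ meets the stepsize requirement. This gives
\[
\varphi\!\left(\theta-\gamma\nabla\varphi(\theta)\right)\le \varphi(\theta)-\gamma\|\nabla\varphi(\theta)\|^2+\tfrac{L\gamma^2}{2}\|\nabla\varphi(\theta)\|^2=\varphi(\theta)-\gamma\bigl(1-\tfrac{L\gamma}{2}\bigr)\|\nabla\varphi(\theta)\|^2.
\]
The single place the stepsize restriction is used is the next simplification: since $\gamma\le 1/L$ implies $L\gamma\le 1$ and hence $1-\tfrac{L\gamma}{2}\ge \tfrac12$, I may lower-bound the gradient-norm coefficient to obtain the clean estimate $\varphi\!\left(\theta-\gamma\nabla\varphi(\theta)\right)\le\varphi(\theta)-\tfrac{\gamma}{2}\|\nabla\varphi(\theta)\|^2$.

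Next I would invoke the PL inequality (Lemma~\ref{lem:pl}) in the form $\|\nabla\varphi(\theta)\|^2\ge 2\mu\bigl(\varphi(\theta)-\varphi^\star\bigr)$ and substitute it into the simplified descent estimate. Because the coefficient $-\tfrac{\gamma}{2}$ multiplying the squared gradient norm is negative, the lower bound on $\|\nabla\varphi(\theta)\|^2$ converts directly into an upper bound, yielding $\varphi\!\left(\theta-\gamma\nabla\varphi(\theta)\right)\le\varphi(\theta)-\mu\gamma\bigl(\varphi(\theta)-\varphi^\star\bigr)$. Subtracting $\varphi^\star$ from both sides and collecting the two copies of $\varphi(\theta)-\varphi^\star$ produces exactly \eqref{eq:exact-contract}.

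There is no genuine obstacle here, so the remark is really about bookkeeping rather than difficulty. The one point that needs care is the sign in the final substitution: the PL bound lower-bounds the gradient norm, so it must be applied to the negative term, which is why the clean descent inequality with the $\tfrac{\gamma}{2}$ coefficient is isolated first rather than substituting into the raw expression. It is also worth noting that $\gamma\le 1/L$ with $\mu\le L$ forces $1-\mu\gamma\in[0,1)$, confirming that the claimed factor is a true contraction and making this lemma the per-step building block used repeatedly in the round-wise analysis of Theorem~\ref{thm:conv-main}.
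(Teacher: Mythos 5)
Your proposal is correct and follows the paper's proof essentially verbatim: both instantiate the descent lemma with $g=\nabla\varphi(\theta)$, use $\gamma\le 1/L$ to lower-bound the coefficient $1-\tfrac{L\gamma}{2}$ by $\tfrac12$, and then substitute the PL inequality to arrive at \eqref{eq:exact-contract}. Your added sign-tracking remark and the observation that $\mu\le L$ makes $1-\mu\gamma$ a genuine contraction factor are accurate but do not change the argument.
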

\begin{proof}
Apply \eqref{eq:descent} with $g=\nabla\varphi(\theta)$ to get
$\varphi(\theta-\gamma\nabla\varphi(\theta)) \le \varphi(\theta) - \gamma(1-\tfrac{L\gamma}{2})\|\nabla\varphi(\theta)\|^2$.
Use \eqref{eq:pl} and $\gamma\le 1/L$ (so $1-L\gamma/2\ge 1/2$) to obtain \eqref{eq:exact-contract}.
\end{proof}

\subsection{FedAvg as an inexact gradient step}

Let the aggregated FedAvg direction be
\begin{equation}
\tilde{g}_t \;=\; \nabla\varphi(\theta_t) + e_t,
\qquad
e_t \;=\; \underbrace{\xi_t}_{\text{stochastic}} + \underbrace{d_t}_{\text{drift}} + \underbrace{b_t}_{\text{poisoning}}.
\label{eq:dir-decomp}
\end{equation}

\begin{lemma}[Second moment of the aggregation error]\label{lem:delta}
Under Assumptions~\ref{ass:variance}--\ref{ass:drift}, there exist constants $c_s, c_\tau > 0$ such that
\begin{equation}
\mathbb{E}[\|e_t\|^2]
\;\le\;
\tfrac{c_s}{m}\,\sigma_g^2
\;+\;
c_\tau\,\zeta^2
\;+\;
\mathbb{E}[\beta_t^2]\,\Gamma^2.
\label{eq:delta-second-moment}
\end{equation}
\end{lemma}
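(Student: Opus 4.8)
The plan is to bound each of the three components of the aggregation error in the decomposition \eqref{eq:dir-decomp} separately and then recombine. First I would exploit the martingale structure of the mini-batch noise: conditioned on the local iterates $\{\theta_n^{(t,j)}\}$, the stochastic term $\xi_t$ has zero mean, so $\mathbb{E}[\langle \xi_t,\, d_t+b_t\rangle]=0$ by the tower rule, and hence $\mathbb{E}[\|e_t\|^2]=\mathbb{E}[\|\xi_t\|^2]+\mathbb{E}[\|d_t+b_t\|^2]$. This isolates the variance contribution cleanly and avoids paying a blanket factor of three across all three terms.

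For the stochastic term, I would write $\xi_t$ as a weighted average, with aggregation weights $w_n=D_n/\sum_{k\in S_t}D_k$, of the per-step deviations $\nabla\ell(\cdot;\theta_n^{(t,j)})-\nabla\varphi_n(\theta_n^{(t,j)})$ over the $m$ sampled clients and $\tau$ local steps. Because these deviations are independent across clients and uncorrelated across steps once conditioned on the iterates, their second moments add while the averaging weights square; Assumption~\ref{ass:variance} caps each by $\sigma_g^2$, yielding $\mathbb{E}[\|\xi_t\|^2]\le \tfrac{c_s}{m}\sigma_g^2$ with $c_s$ absorbing the $1/\tau$ normalization. The $1/m$ factor is the key gain from sampling $m$ participants.

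For the remaining terms I would split $\|d_t+b_t\|^2\le 2\|d_t\|^2+2\|b_t\|^2$ into honest drift and adversarial bias. The drift piece reduces, via $L$-smoothness (Assumption~\ref{ass:smooth}), to controlling $\|\theta_n^{(t,j)}-\theta^{(t)}\|$: unrolling the local SGD recursion \eqref{eq:local_sgd} gives $\theta_n^{(t,j)}-\theta^{(t)}=-\eta\sum_{i<j}(\cdot)$, so the accumulated deviation over $\tau$ steps contributes the combinatorial factor $\sum_{j}j=\tau(\tau-1)/2$, and Assumption~\ref{ass:drift} bounds the residual heterogeneity by $\zeta^2$, producing $c_\tau\zeta^2$ with $c_\tau=\tfrac{\tau(\tau-1)}{2}\eta^2L^2$. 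The poisoning piece uses that only the $|S_t\cap A|$ adversarial clients shift their gradients, each by at most $\Gamma$; by convexity of the norm applied to the aggregation weights, $\|b_t\|\le\beta_t\Gamma$, whence $\mathbb{E}[\|b_t\|^2]\le\mathbb{E}[\beta_t^2]\Gamma^2$.

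Collecting the three bounds and folding the numerical constants into $c_s$ and $c_\tau$ yields \eqref{eq:delta-second-moment}. I expect the drift term to be the main obstacle: correctly unrolling the $\tau$-step recursion, keeping the honest and adversarial iterates' deviations separate while they co-evolve, and verifying that the weighted-average heterogeneity $\sum_n w_n\nabla\varphi_n(\theta^{(t)})-\nabla\varphi(\theta^{(t)})$ is unbiased under uniform client sampling so that it folds cleanly into the $\zeta^2$ term rather than introducing an uncontrolled bias.
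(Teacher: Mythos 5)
Your proposal follows essentially the same route as the paper's proof: the identical three-way decomposition of the aggregation error $e_t=\xi_t+d_t+b_t$ into stochastic noise, client drift, and poisoning bias, with the same per-term bounds ($1/m$ variance reduction from averaging over sampled clients, drift growing as $\tau^2$ from unrolling the local SGD recursion, and $\|b_t\|\le\beta_t\Gamma$ so that $\mathbb{E}[\|b_t\|^2]\le\mathbb{E}[\beta_t^2]\Gamma^2$). If anything, your execution is more careful than the paper's sketch, which simply ``sums the three contributions'' without addressing cross terms, whereas your conditional-orthogonality argument for $\xi_t$ and the split $\|d_t+b_t\|^2\le 2\|d_t\|^2+2\|b_t\|^2$ handle them explicitly --- at the cost of a factor $2$ on the $\mathbb{E}[\beta_t^2]\Gamma^2$ term, a looseness in the fixed coefficient that the paper's own proof shares and likewise does not resolve.
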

\begin{proof}
Using the algebraic decomposition (as stated in the body),\\[5pt]
\begin{math}
\begin{aligned}
\nabla \varphi_n(\theta_n^{(t,j)}) =
\nabla \varphi(\theta_t) &+
\underbrace{\big(\nabla \varphi_n(\theta_n^{(t,j)}) - \nabla \varphi_n(\theta_t)\big)}_{\text{stochastic noise}}\\ &+ \underbrace{\big(\nabla \varphi_n(\theta_t) - \nabla \varphi(\theta_t)\big)}_{\text{client drift}}.
\end{aligned}
\end{math}\\[5pt]
Averaging $m$ clients reduces stochastic variance by $1/m$ (by standard variance averaging), giving $c_s\sigma_g^2/m$ where $c_s$ is an absolute constant. The accumulation of $\tau$ local steps inflates drift, with $c_\tau$ growing as $\tau^2$ (standard in local-SGD analyses \cite{li2020convergencefedavgnoniiddata}). Since only a $\beta_t$ fraction of the $m$ selected clients are adversarial, the aggregate poisoning bias has second moment bounded by $\mathbb{E}[\beta_t^2]\,\Gamma^2$. Summing the three contributions gives \eqref{eq:delta-second-moment}.
\end{proof}

\begin{proposition}
    [FedAvg half-step]\label{prop:half}
    Let $\gamma_g\le 1/L$ and update $\theta_t^{+}=\theta_t-\gamma_g\tilde{g}_t$. Fix any $a\in(0,1)$ and define
    \[
    c_g(\gamma_g)\;\triangleq\;\tfrac{\gamma_g}{2a}+\tfrac{L\gamma_g^2}{2}.
    \]
    Then
    \[
    \!\left[\varphi(\theta_t^{+})-\varphi^\star\right]
    \;\le\;
    (1-\mu\gamma_g)\,\!\left[\varphi(\theta_t)-\varphi^\star\right]
    \;+\;
    \]
\begin{equation}
c_g(\gamma_g)\,\Big(\tfrac{c_s}{m}\sigma_g^2 + c_\tau\zeta^2 + [\beta_t^2]\Gamma^2\Big).
\label{eq:half}
\end{equation}
\end{proposition}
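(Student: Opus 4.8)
The plan is to analyze the FedAvg half-step $\theta_t^{+}=\theta_t-\gamma_g\tilde{g}_t$ as an \emph{inexact} gradient-descent step and run a descent-plus-PL argument, with the one nonstandard twist that the aggregation error $e_t$ is not zero-mean. First I would apply the descent lemma \eqref{eq:descent} with search direction $\tilde{g}_t$ and stepsize $\gamma_g\le 1/L$, giving $\varphi(\theta_t^{+})\le\varphi(\theta_t)-\gamma_g\langle\nabla\varphi(\theta_t),\tilde{g}_t\rangle+\tfrac{L\gamma_g^{2}}{2}\|\tilde{g}_t\|^{2}$. I would then substitute the decomposition $\tilde{g}_t=\nabla\varphi(\theta_t)+e_t$ from \eqref{eq:dir-decomp} and expand, collecting the result into three groups: a descent term with coefficient $-\gamma_g+\tfrac{L\gamma_g^{2}}{2}$ on $\|\nabla\varphi(\theta_t)\|^{2}$, a single cross term $-\gamma_g(1-L\gamma_g)\langle\nabla\varphi(\theta_t),e_t\rangle$ (after merging the inner-product and quadratic contributions), and a residual term $\tfrac{L\gamma_g^{2}}{2}\|e_t\|^{2}$.

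The crux is the cross term. Since $e_t=\xi_t+d_t+b_t$ and only the stochastic component $\xi_t$ is conditionally zero-mean --- the drift $d_t$ and poisoning bias $b_t$ are not --- I cannot discard it by taking expectations, and must instead bound it deterministically by Young's inequality with the free parameter $a\in(0,1)$: $|\langle\nabla\varphi(\theta_t),e_t\rangle|\le\tfrac{a}{2}\|\nabla\varphi(\theta_t)\|^{2}+\tfrac{1}{2a}\|e_t\|^{2}$. This transfers a $\tfrac{a}{2}$-weighted penalty back onto the gradient-norm term and a $\tfrac{1}{2a}$-weighted penalty onto the residual. Using $1-L\gamma_g\le 1$, the residual coefficient becomes at most $\tfrac{L\gamma_g^{2}}{2}+\tfrac{\gamma_g}{2a}=c_g(\gamma_g)$, which is exactly the stated constant.

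It then remains to convert the surviving negative $\|\nabla\varphi(\theta_t)\|^{2}$ term into a contraction. Writing $u=L\gamma_g\in(0,1]$, the net coefficient of $\|\nabla\varphi(\theta_t)\|^{2}$ equals $-\gamma_g\big(1-\tfrac{u}{2}-\tfrac{a(1-u)}{2}\big)$, and a short check shows the bracket is monotone in $u$ with minimum $\tfrac12$ at $u=1$, hence at least $\tfrac12$ throughout. Therefore the effective negative coefficient is at least $\tfrac{\gamma_g}{2}$ in magnitude, and applying the PL inequality \eqref{eq:pl}, $\|\nabla\varphi(\theta_t)\|^{2}\ge 2\mu(\varphi(\theta_t)-\varphi^\star)$, produces a contraction of at least $(1-\mu\gamma_g)$; since $\varphi(\theta_t)-\varphi^\star\ge0$ this tighter factor may be relaxed to the claimed $(1-\mu\gamma_g)$. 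Finally I would take total expectation and substitute Lemma~\ref{lem:delta}, i.e.\ $\mathbb{E}\|e_t\|^{2}\le\tfrac{c_s}{m}\sigma_g^{2}+c_\tau\zeta^{2}+\mathbb{E}[\beta_t^{2}]\Gamma^{2}$, to arrive at \eqref{eq:half}.

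The main obstacle is precisely the simultaneous control required in the second and third steps: the parameter $a$ must be kept strictly below $1$ so that the Young penalty does not erode the descent term below the $\tfrac{\gamma_g}{2}$ level needed for the $(1-\mu\gamma_g)$ rate, yet the same $a$ fixes the residual weight $\tfrac{1}{2a}$, so one cannot shrink the residual without weakening the contraction. Verifying that the bracket stays at least $\tfrac12$ across the entire admissible range $\gamma_g\in(0,1/L]$ --- rather than only for small $\gamma_g$ --- is the single place where both hypotheses $a\in(0,1)$ and $\gamma_g\le 1/L$ are genuinely used, and is the only non-routine inequality in the argument.
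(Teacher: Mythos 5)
Your proof is correct, and it follows the same skeleton as the paper's argument (descent lemma with direction $\tilde g_t$, the decomposition $\tilde g_t=\nabla\varphi(\theta_t)+e_t$, Young's inequality with parameter $a$, the PL inequality, and finally Lemma~\ref{lem:delta}), but your treatment of the quadratic term is genuinely different and in fact tighter. The paper bounds $\|\tilde g_t\|^2\le 2\|\nabla\varphi(\theta_t)\|^2+2\|e_t\|^2$ and applies Young's inequality only to the inner-product cross term; if you carry that through literally, the coefficient of $\|e_t\|^2$ comes out as $\tfrac{\gamma_g}{2a}+L\gamma_g^2$ rather than $c_g(\gamma_g)=\tfrac{\gamma_g}{2a}+\tfrac{L\gamma_g^2}{2}$, and the coefficient of $\|\nabla\varphi(\theta_t)\|^2$ is $-\gamma_g\bigl(1-\tfrac{a}{2}-L\gamma_g\bigr)$, which stays at or below the $-\tfrac{\gamma_g}{2}$ level needed for the $(1-\mu\gamma_g)$ rate only under the stronger restriction $L\gamma_g\le\tfrac{1-a}{2}$; near $\gamma_g=1/L$ it can even become positive, so the paper's sketch, read literally, does not deliver the stated contraction over the whole range $\gamma_g\le 1/L$. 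You instead expand $\|\tilde g_t\|^2$ exactly, merge the two cross contributions into the single term $-\gamma_g(1-L\gamma_g)\langle\nabla\varphi(\theta_t),e_t\rangle$, and only then apply Young's inequality; your monotonicity check in $u=L\gamma_g$ shows the gradient coefficient is at most $-\tfrac{\gamma_g}{2}$ for every $\gamma_g\in(0,1/L]$ and every $a\in(0,1)$, and your residual coefficient $\tfrac{L\gamma_g^2}{2}+\gamma_g(1-L\gamma_g)\tfrac{1}{2a}\le c_g(\gamma_g)$ matches the stated constant exactly. In short, your route proves the proposition exactly as stated, with the correct constant and the full stepsize range, whereas the paper's own derivation yields a looser residual and needs a smaller stepsize; your observation that the non-zero-mean drift and poisoning components force a deterministic Young bound (rather than killing the cross term in expectation) is also the right justification for a step the paper leaves implicit.
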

\begin{proof}
By \eqref{eq:descent} with $g=\tilde{g}_t$,
\begin{align}
\!\left[\varphi(\theta_t^{+})\right]
&\le \varphi(\theta_t)
- \gamma_g\,\langle \nabla\varphi(\theta_t), \tilde{g}_t\rangle
+ \tfrac{L\gamma_g^2}{2}\,\|\tilde{g}_t\|^2. \label{eq:half-1}
\end{align}
Write $\tilde{g}_t=\nabla\varphi(\theta_t)+e_t$ and expand the inner product:
\[
\langle \nabla\varphi(\theta_t), \tilde{g}_t\rangle
= \|\nabla\varphi(\theta_t)\|^2 + \langle \nabla\varphi(\theta_t), e_t\rangle.
\]
Apply Young’s inequality with $a\in(0,1)$:
\(
\langle \nabla\varphi(\theta_t), e_t\rangle
\ge -\tfrac{a}{2}\|\nabla\varphi(\theta_t)\|^2 - \tfrac{1}{2a}\|e_t\|^2.
\)
Also $\mathbb{E}[\|\tilde{g}_t\|^2] \le 2\|\nabla\varphi(\theta_t)\|^2 + 2\mathbb{E}[\|e_t\|^2]$.
Substitute into \eqref{eq:half-1}, use \eqref{eq:pl}, collect terms, and simplify with $\gamma_g\le 1/L$ to get
\[
\!\left[\varphi(\theta_t^{+})-\varphi^\star\right]
\le (1-\mu\gamma_g)\,\!\left[\varphi(\theta_t)-\varphi^\star\right]
+ \Big(\tfrac{\gamma_g}{2a}+\tfrac{L\gamma_g^2}{2}\Big)\|e_t\|^2.
\]
Finally apply Lemma~\ref{lem:delta}.
\end{proof}

\subsection{Reserve-set retraining}

\begin{proposition}[Reserve \(r\)-step descent]\label{prop:reserve}
Let $\gamma_r\le 1/L$ and run $r$ unbiased reserve steps from $\theta_t^{+}$ with variance $\sigma_r^2$ and mismatch $\varepsilon_r$. Then
\[
\!\left[\varphi(\theta_{t+1})-\varphi^\star \,\big|\, \theta_t^{+}\right]
\le (1-\mu\gamma_r)^{r}\,\big(\varphi(\theta_t^{+})-\varphi^\star\big)
\]
\begin{align}
+ \tfrac{L\,\gamma_r^2\,r}{2}\,\sigma_r^2
+ (1-\mu\gamma_r)^{r}\,\varepsilon_r^2 .
\label{eq:reserve}
\end{align}
\end{proposition}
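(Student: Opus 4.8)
The plan is to treat each reserve-set SGD step as an inexact gradient step on the global objective $\varphi$, exactly paralleling the template used for the FedAvg half-step in Proposition~\ref{prop:half}, and then iterate the resulting single-step contraction $r$ times. Concretely, I would write the $s$-th reserve update as $\theta^{(s)}=\theta^{(s-1)}-\gamma_r\hat g_r$, where $\hat g_r$ is the stochastic reserve gradient satisfying $\mathbb{E}[\hat g_r\mid\theta^{(s-1)}]=\nabla\varphi_r(\theta^{(s-1)})$, the variance bound $\mathbb{E}[\|\hat g_r-\nabla\varphi_r(\theta^{(s-1)})\|^2]\le\sigma_r^2$, and the objective mismatch $\|\nabla\varphi_r(\theta)-\nabla\varphi(\theta)\|\le\varepsilon_r$.

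First I would apply the smoothness descent inequality~\eqref{eq:descent} with $g=\hat g_r$ and $\gamma_r\le 1/L$, then take the conditional expectation over the reserve mini-batch. Unbiasedness lets me replace $\hat g_r$ by $\nabla\varphi_r$ in the linear term, while the variance decomposition $\mathbb{E}[\|\hat g_r\|^2]=\|\nabla\varphi_r\|^2+\sigma_r^2$ handles the quadratic term and contributes the $\tfrac{L\gamma_r^2}{2}\sigma_r^2$ stochastic residual. I would then substitute $\nabla\varphi_r=\nabla\varphi+(\nabla\varphi_r-\nabla\varphi)$ and control the resulting cross terms with Young's inequality, using $\|\nabla\varphi_r-\nabla\varphi\|\le\varepsilon_r$ to isolate a residual proportional to $\varepsilon_r^2$. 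Converting $\|\nabla\varphi(\theta^{(s-1)})\|^2$ into $2\mu\big(\varphi(\theta^{(s-1)})-\varphi^\star\big)$ via the PL inequality~\eqref{eq:pl} and simplifying coefficients with $\gamma_r\le 1/L$ yields a per-step bound of the form $\mathbb{E}[\varphi(\theta^{(s)})-\varphi^\star]\le(1-\mu\gamma_r)\big(\varphi(\theta^{(s-1)})-\varphi^\star\big)+\tfrac{L\gamma_r^2}{2}\sigma_r^2+c\,\varepsilon_r^2$, mirroring the exact-step contraction of Lemma~\ref{lem:contract} but with the two additive error sources.

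Next I would unroll this recursion from $s=1$ to $s=r$. The contraction factors compound multiplicatively to the leading $(1-\mu\gamma_r)^r\big(\varphi(\theta_t^{+})-\varphi^\star\big)$ term, while the per-step residuals accumulate as the geometric series $\sum_{s=0}^{r-1}(1-\mu\gamma_r)^s$. Upper-bounding each summand by $1$ gives the crude but clean $\tfrac{L\gamma_r^2 r}{2}\sigma_r^2$ variance term of~\eqref{eq:reserve}, and carrying the mismatch contribution through the $r$-fold contraction yields the $(1-\mu\gamma_r)^r\varepsilon_r^2$ term.

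The main obstacle I anticipate is the bookkeeping around the objective mismatch: because the reserve steps descend $\varphi_r$ rather than the global $\varphi$, the effective search direction carries a deterministic bias of magnitude up to $\varepsilon_r$, so the Young's-inequality split must be balanced so that this bias neither erodes the $(1-\mu\gamma_r)$ contraction factor nor inflates the $\sigma_r^2$ coefficient. Preventing the bias from contaminating the contraction and collecting the two residuals into exactly the form stated in~\eqref{eq:reserve} is the delicate step; the stratified construction of $\mathcal{D}_r$, which keeps $\varepsilon_r$ small, is precisely what renders this bias negligible in practice.
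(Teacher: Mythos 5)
Your variance bookkeeping is fine, but your handling of the mismatch term cannot produce the bound as stated, and this is exactly where your route diverges from the paper's. You run the contraction on the global objective $\varphi$, treating $\nabla\varphi_r-\nabla\varphi$ as a per-step bias absorbed via Young's inequality. That gives a per-step recursion of the form
\[
\mathbb{E}\big[\varphi(\theta^{(s)})-\varphi^\star\big]
\;\le\;(1-\mu\gamma_r)\big(\varphi(\theta^{(s-1)})-\varphi^\star\big)
+\tfrac{L\gamma_r^2}{2}\sigma_r^2+c\,\varepsilon_r^2,
\]
and unrolling it makes \emph{every} residual, including the $c\,\varepsilon_r^2$ one, accumulate through the geometric series $\sum_{s=0}^{r-1}(1-\mu\gamma_r)^s$. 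The bias incurred at the last step is not contracted at all, so the mismatch contribution you actually obtain is $\big(\sum_{s=0}^{r-1}(1-\mu\gamma_r)^s\big)\,c\,\varepsilon_r^2$, i.e.\ of order $\min\{r\,c\,\varepsilon_r^2,\;c\,\varepsilon_r^2/(\mu\gamma_r)\}$, not $(1-\mu\gamma_r)^r\varepsilon_r^2$. Your phrase ``carrying the mismatch contribution through the $r$-fold contraction'' is precisely the step that fails: only a quantity already present at $s=0$ gets multiplied by the full factor $(1-\mu\gamma_r)^r$; per-step injections do not.

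The paper avoids this by never placing the bias inside the per-step recursion. It applies Lemma~\ref{lem:contract} (plus the $\tfrac{L\gamma_r^2}{2}\sigma_r^2$ variance increment per step) to the \emph{clean reserve objective} $\varphi_r$, on which the reserve SGD steps are genuinely unbiased, so the $\varphi_r$-gap contracts by $(1-\mu\gamma_r)$ each step with no $\varepsilon_r$ term appearing inside the loop. The mismatch $\|\nabla\varphi_r-\nabla\varphi\|\le\varepsilon_r$ enters only once, as a conversion between the $\varphi$-gap and the $\varphi_r$-gap at the start of the reserve phase; because that conversion error is attached to the initial gap, it is contracted by all $r$ subsequent steps, which is what yields the $(1-\mu\gamma_r)^{r}\varepsilon_r^2$ term in~\eqref{eq:reserve}. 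To repair your argument you must either switch the contraction to $\varphi_r$ as the paper does, or accept a weaker steady-state constant of order $\varepsilon_r^2/(\mu\gamma_r)$ (or $r\,\varepsilon_r^2$) in place of $(1-\mu\gamma_r)^{r}\varepsilon_r^2$ in the statement.
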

\begin{proof}
Apply Lemma~\ref{lem:contract} sequentially to the clean reserve objective (contraction $(1-\mu\gamma_r)$ per step), add the variance term $\tfrac{L\gamma_r^2}{2}\sigma_r^2$ per step, and carry the fixed mismatch as a contracted bias $(1-\mu\gamma_r)^{r}\varepsilon_r^2$.
\end{proof}

\subsection{Composition and conclusion}

Condition on $\theta_t^{+}$ and apply \eqref{eq:reserve}, then take full expectation:
\[
\!\left[\varphi(\theta_{t+1}) - \varphi^\star\right]
\le (1-\mu\gamma_r)^{r}\,\!\left[\varphi(\theta_t^{+}) - \varphi^\star\right]\]
\begin{align}
   + \tfrac{L\,\gamma_r^{2}\,r}{2}\,\sigma_r^2
   + (1-\mu\gamma_r)^{r}\,\varepsilon_r^2,
\end{align}
\[
\le (1-\mu\gamma_r)^{r}
   \Big((1-\mu\gamma_g)\,\!\left[\varphi(\theta_t) - \varphi^\star\right]
   + C_{\mathrm{local}}'\Big)
\]
\begin{align}
   + \tfrac{L\,\gamma_r^{2}\,r}{2}\,\sigma_r^2
   + (1-\mu\gamma_r)^{r}\,\varepsilon_r^2, \label{eq:comp-2}
\end{align}
where
\begin{equation}
C_{\mathrm{local}}' \;=\; c_g(\gamma_g)\Big(\tfrac{c_s}{m}\sigma_g^2 + c_\tau\zeta^2 +[\beta_t^2]\Gamma^2\Big).
\label{eq:Clocal}
\end{equation}
Define
\[
q \;\triangleq\; (1-\mu\gamma_g)(1-\mu\gamma_r)^{r}, 
\qquad
\]
\begin{equation}
C' \;\triangleq\; (1-\mu\gamma_r)^{r}\,C_{\mathrm{local}}'
+ \tfrac{L\,\gamma_r^{2}\,r}{2}\,\sigma_r^2
+ (1-\mu\gamma_r)^{r}\,\varepsilon_r^2.
\label{eq:qC}
\end{equation}
Substituting \eqref{eq:qC} into \eqref{eq:comp-2} yields
\begin{equation}
\!\left[\varphi(\theta_{t+1}) - \varphi^\star\right]
\;\le\;
q\,\!\left[\varphi(\theta_t) - \varphi^\star\right] + C'.
\end{equation}
This is exactly the statement of Theorem~\ref{thm:conv-main}. \qed

\begin{IEEEbiography}
[{\includegraphics[width=1in,height=1.25in,clip,keepaspectratio]{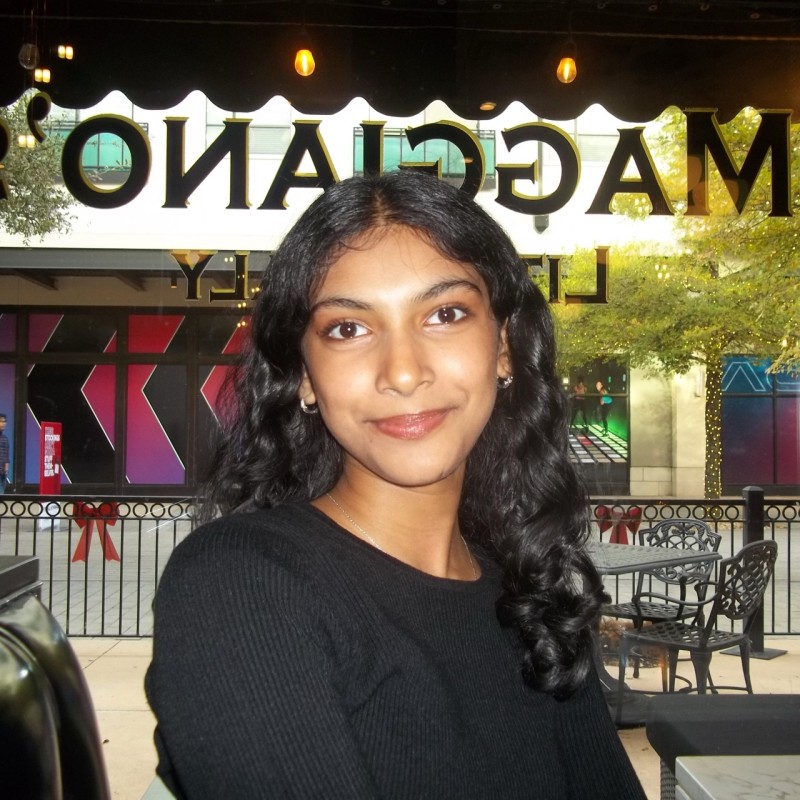}}]{Sathwika Peechara} is currently pursuing the B.S. in btoh Computer Science and Probability and Statistics at the University of California San Diego. Her research interests lie in AI cybersecurity, distributed ML systems, and federated learning.
\end{IEEEbiography}

\begin{IEEEbiography}
[{\includegraphics[width=1in,height=1.25in,clip,keepaspectratio]{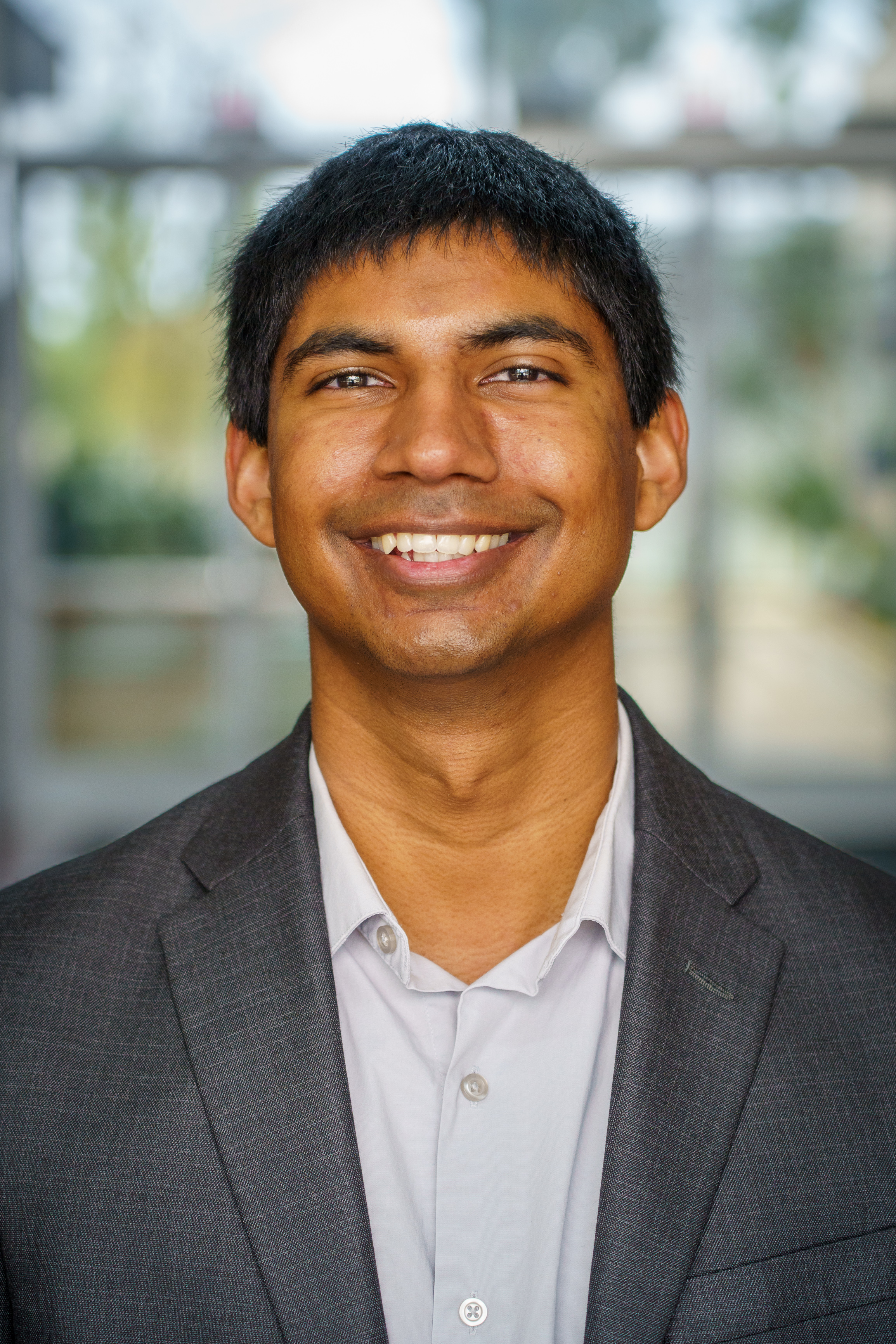}}]{Rajeev Sahay} received the B.S. degree in electrical engineering from The University of Utah, Salt Lake City, UT, USA, in 2018, and the M.S. and Ph.D. degrees in electrical and computer engineering from Purdue University, West Lafayette, IN, USA, in 2021 and 2022, respectively. Currently, he is a faculty member in the Department of Electrical and Computer Engineering at the University of California San Diego (UCSD). He is the recipient of the best undergraduate teaching award in the department of Electrical and Computer Engineering at UCSD, the Purdue Engineering Dean’s Teaching Fellowship and was named an Exemplary Reviewer by the IEEE Wireless Communications Letters. His research interests lie in the intersection of networking and machine learning, especially in their applications to signal processing, wireless communications, and engineering education. 
\end{IEEEbiography}

\end{document}